\newcommand{\sourceurl}{\url{http://math.unice.fr/laboratoire/logiciels}}
\author{Benedikt Ahrens}
\title[Modules over relative monads for syntax and semantics]{Modules over relative monads \\ for syntax and semantics}
\begin{document}

\begin{abstract}

\setlength{\parindent}{0pt}

We give an algebraic characterization of the syntax and semantics of a class of languages
with variable binding.

We introduce a notion of \emph{2--signature}: such a signature specifies not only the \emph{terms} of a language, but also
\emph{reduction rules} on those terms.
To any 2--signature $S$ we associate a category of ``models'' of $S$. This category has an initial object, which 
integrates the terms freely generated by $S$, 
and which is equipped with reductions according to the inequations given in $S$.
We call this initial object the \emph{language generated by $S$}.
Models of a 2--signature are built from \emph{relative monads} and \emph{modules} over such monads.
Through the use of monads, the models --- and in particular, the initial model --- 
come equipped with a \emph{substitution operation} that is compatible with reduction in a suitable sense.

The initiality theorem is formalized in the proof assistant \textsf{Coq}, yielding a machinery which, when fed 
with a 2--signature, provides the associated programming language with reduction relation and certified substitution.

\end{abstract}

\maketitle

\tableofcontents

\section{Introduction}

We give an algebraic characterization, via a universal property, of the programming 
language generated by a \emph{signature}.

More precisely, we define a notion of \emph{2--signature} which allows the specification of
the \emph{syntax} of a programming language --- via a 1--signature, say, $S$ --- 
as well as its \emph{semantics} in form of reduction rules, specified through 
a set $A$ of inequations over $S$.
To any 1--signature $S$ we associate a category of \emph{models of $S$}. 
Given a 2--signature $(S,A)$, the inequations of $A$ give rise to a \emph{satisfaction}
predicate on the models of $S$, and thus specify a full subcategory of models of $S$
which satisfy the inequations of $A$.
We call this subcategory the \emph{category of models of $(S,A)$}.
Our main theorem states that this category has an initial object ---
the programming language associated to $(S,A)$ ---,
which integrates the terms generated by $S$, equipped with the reduction relation
generated by the inequations of $A$.

The theorem has been fully certified in the proof assistant \textsf{Coq} \parencite{coq}.
The \textsf{Coq} theory files as well as documentation are available online at
\begin{center}
 \sourceurl .
\end{center}

\subsection{Summary}

We define a notion of 2--signature in order to specify the terms and reduction rules
of functional programming languages.
Given any 2--signature, we characterize its associated programming language 
  as initial object in some category, thus
giving an algebraic definition of abstract syntax with semantics.
This characterization of syntax with reduction rules is given in two steps:
\begin{enumerate}
 \item 
At first \emph{pure syntax} is characterized as initial object in some category.
Here we use the term ``pure'' to express the fact that no semantic aspects such as reductions on 
terms are considered. This characterization is actually a consequence of a result by \textcite{DBLP:conf/wollic/HirschowitzM07}
 (cf.\ Sec.\ \ref{sec:intro_pure_syntax}).
 \item  Afterwards we consider \emph{inequations} specifying \emph{reduction rules}. Given a set of reduction rules
          for terms, we build up on the preceding result to give an algebraic characterization of 
          syntax \emph{with reduction}.
\end{enumerate}

\noindent
In summary, the merit of this work is to give an algebraic characterization of syntax \emph{with reduction rules},
building up on such a characterization for \emph{pure} syntax given by \textcite{DBLP:conf/wollic/HirschowitzM07}.

Our approach is based on relative monads
\parencite{DBLP:conf/fossacs/AltenkirchCU10}  
from the category $\Set$ of sets to the category $\PO$ of preorders and modules over such monads.  
Compared to traditional monads, relative monads allow for different categories as domain and codomain.

We now explain the above two points in more detail:

\subsubsection{Pure Syntax}\label{sec:intro_pure_syntax}

An \emph{arity} specifies the type of a term constructor: 
it is given by a list of natural numbers, the length of which specifies the number 
of arguments of the constructor. The list entries specify the number of variables that are bound in the 
corresponding argument.
A {1--signature} is a family of arities.
To any 1--signature $S$ we associate a category $\Rep^{\Delta}(S)$ of \emph{representations} --- or ``models'' --- of $S$,
where a model of $S$ is built from a relative monad on the functor $\Delta : \Set\to\PO$.

This category has an initial object (cf.\ Lem.\ \ref{lem:initial_wo_ineq}), which integrates the terms
freely generated by the signature. We call this object the \emph{(pure) syntax associated to $S$}. 
As mentioned above, we use the term ``pure'' to distinguish this initial object from 
the initial object associated to a \emph{2--signature}, 
which gives an analogous characterization of syntax \emph{with reduction rules} (cf.\ below).

Initiality for pure syntax is actually a consequence of a related initiality theorem proved by \textcite{DBLP:conf/wollic/HirschowitzM07}:
in that work, the authors associate, to any signature $S$, a category $\Rep(S)$ of models of $S$, where a model is built from 
a monad over $\Set$.
We connect the corresponding categories by exhibiting a pair of adjoint functors (cf.\ Lem.\ \ref{lem:adjunction_reps}) 
between our category $\Rep^{\Delta}(S)$ of representations of $S$ and that of \citeauthor{DBLP:conf/wollic/HirschowitzM07}, 
  
\begin{equation*}   
       \begin{xy}
        \xymatrix@C=4pc{
                  **[l] \Rep(S) \rtwocell<5>_{U_*}^{\Delta_*}{'\bot} & **[r] \Rep^{\Delta}(S)
}
       \end{xy} \enspace .
\end{equation*}
We thus obtain an initial object in our category $\Rep^{\Delta}(S)$ using the fact that left adjoints are co\-con\-tin\-uous:
the image under the functor 
$\Delta_* : \Rep(S) \to \Rep^{\Delta}(S)$ 
of the initial object in the category $\Rep(S)$ is initial in $\Rep^{\Delta}(S)$.

\subsubsection{Syntax with Reduction Rules}

Given a 1--signature $S$, an \emph{$S$--inequation} $E = (\alpha, \gamma)$ associates a pair $(\alpha^R, \gamma^R)$ of \emph{parallel} 
morphisms in a suitable category
to any
representation $R$ of $S$.
In a sense made precise later, we can ask whether 
  \[  \alpha^R \enspace \leq \enspace \gamma^R \enspace , \]
due to our use of relative monads into the category $\PO$ of preorders.
If this is the case, we say that $R$ \emph{satisfies} the inequation $E$.
A \emph{2--signature} is a pair $(S,A)$ consisting of a 1--signature $S$, which specifies the terms of a language, together with a set $A$ of 
\emph{$S$--inequations}, which specifies reduction rules on those terms.
Given a 2--signature $(S,A)$, we call \emph{representation of $(S,A)$} any representation of $S$ that
satisfies each inequation of $A$.
The \emph{category of representations of $(S,A)$} is defined to be the full subcategory of representations of $S$ 
whose objects are representations of $(S,A)$.

We would like to exhibit an initial object in the category of representations of $(S,A)$,
and thus must rule out inequations which are never satisfied.
We call \emph{classic} $S$--inequation any $S$--inequation whose codomain is of a particular form.
Our main result states that for any set $A$ of classic $S$--inequations the category of representations of $(S,A)$ has an initial object.
The class of classic inequations is large enough to account for the 
fundamental reduction rules; 
in particular, beta and eta reductions are given by classic inequations.

Our definitions ensure that any reduction rule between terms that is expressed by an inequation $E\in A$ is automatically 
propagated into subterms. 
The set $A$ of inequations hence only needs to contain some ``generating'' inequations, 
a fact that is well illustrated by our example 2--signature $\Lambda\beta$ of the untyped lambda calculus with beta reduction.
This signature has only one inequation $\beta$ which expresses beta reduction at the root of a term,
   \[ \lambda x . M(N) \enspace \leadsto \enspace M [x:=N] \enspace .\]  
The initial representation of $\Lambda\beta$ is given by the untyped lambda calculus, equipped with
the reflexive and transitive beta reduction relation $\twoheadrightarrow_{\beta}$ as presented by \textcite{barendregt_barendsen}.

\subsection{Related Work}\label{sec:rel_work}

Initial Semantics results for syntax with variable binding were first presented 
on the LICS'99 conference.
Those results are concerned only with the \emph{syntactic aspect} of languages: they characterize the \emph{set of terms}
of a language as an initial object in some category, while not taking into account reductions on terms.
In lack of a better name, we refer to this kind of initiality results as \emph{purely syntactic}.

Some of these initiality theorems have been extended to also incorporate semantic aspects, e.g., in form of 
equivalence relations between terms. These extensions are reviewed in the second paragraph.

\paragraph{Purely syntactic results} 

Initial Semantics for ``pure'' syntax --- i.e.\ without considering semantic aspects --- 
with variable binding were presented by several people independently, differing in the modelling of 
variable binding:

The \emph{nominal approach} by \textcite{gabbay_pitts99} (see also \cite{gabbay_pitts, pitts}) 
uses a set theory enriched with \emph{atoms} to establish an initiality result. 
Their approach models lambda abstraction as a constructor which takes a pair of a variable name and a term as arguments.
In contrast to the other techniques mentioned in this list, in the nominal approach
syntactic equality is different from $\alpha$--equivalence.
\textcite{hofmann} proves an initiality result modelling variable binding in a 
Higher--Order Abstract Syntax (HOAS) style.
\textcite{fpt} (also \cite{fio02, fio05}) 
model variable binding through nested datatypes as introduced by \textcite{BirdMeertens98:Nested}.
\citeauthor{fpt}'s approach is extended to \emph{simply--typed syntax} by \textcite{DBLP:conf/ppdp/MiculanS03}.
\textcite{Tanaka:2005:UCF:1088454.1088457}
generalize and subsume those three approaches to a general category of contexts. 
An overview of this work and references to more technical papers is given by \textcite{Power:2007:ASS:1230146.1230276}.
\textcite{DBLP:conf/wollic/HirschowitzM07} prove an initiality result for untyped syntax based on 
the notion of \emph{module over a monad}. Their work has been extended to simply--typed syntax by \textcite{ju_phd}.

\paragraph{Incorporating Semantics}

Rewriting in nominal settings has been examined by \textcite{fernandez_gabbay_nominal_rewriting}.
\textcite{DBLP:journals/njc/GhaniL03} present rewriting for algebraic theories without variable binding;
they characterize 
equational theories (with a \emph{symmetry} rule) resp.\ rewrite systems (with \emph{reflexivity} and \emph{transitivity} rule, but without \emph{symmetry})
 as \emph{coequalizers} resp.\ \emph{coinserters} in a category of monads on
the categories $\Set$ resp.\ $\PO$. 
\textcite{DBLP:conf/icalp/FioreH07} have extended Fiore's work to integrate semantic aspects into initiality results.
In particular, \citeauthor{hur_phd}'s thesis \parencite{hur_phd} is dedicated to \emph{equational} systems for syntax with variable binding.
In a ``Further research'' section \parencite[Chap.\ 9.3]{hur_phd}, \citeauthor{hur_phd} suggests the use of preorders, or more generally, 
arbitrary relations to model \emph{in}equational systems.
\textcite{DBLP:conf/wollic/HirschowitzM07} prove initiality of the set of lambda terms modulo beta and eta conversion
in a category of \emph{exponential monads}.
In an unpublished paper \parencite{journals/corr/abs-0704-2900} they define a notion of \emph{half--equation} and 
\emph{equation} to express congruence between terms. We adopt their definition in this paper, but interpret a pair
of half--equations as \emph{in}equation rather than equation.
This emphasizes the \emph{dynamic} viewpoint of reductions as \emph{directed} equalities rather than
the \emph{static}, mathematical viewpoint one obtains by considering symmetric relations.
In a ``Future Work'' section, \textcite[Sect.\ 8]{DBLP:journals/iandc/HirschowitzM10} 
mention the idea of using preorders as an approach to model  semantics, %\cite{modules_lin},
and they suggest interpreting the untyped lambda calculus with beta and eta reduction rule as a monad over the category $\PO$ of preordered sets.
The present work gives an alternative viewpoint to their suggestion by considering
the lambda calculus with beta reduction --- and a class of programming languages in general --- 
as a preorder--valued \emph{relative} monad on the functor $\Delta:\Set\to\PO$.
The rationale underlying our use of relative monads from sets to preorders is that we consider \emph{contexts} 
to be given by unstructured sets, whereas \emph{terms} of a language carry
structure in form of a reduction relation. In this view it is reasonable to suppose variables and terms to live in \emph{different} 
categories, which is possible through the use of relative monads on the functor $\Delta:\Set\to\PO$ (cf.\ Def.\ \ref{def:delta}) 
instead of traditional monads (cf.\ also Rem.\ \ref{rem:endo_ord}).
Relative monads were introduced by \textcite{DBLP:conf/fossacs/AltenkirchCU10}. In that work, the authors
characterize the untyped lambda calculus as a relative monad over the inclusion functor from finite sets to sets.
Their point of view can be combined with ours, leading to considering monads on the functor
 $\comp{i}{\Delta}:\Fin\to\PO$, cf.\ Rem.\ \ref{rem:finite_contexts}.
T.\ \textcite{HIRSCHOWITZ:2010:HAL-00540205:2}, taking the viewpoint of Categorical Semantics, 
defines a category \textsf{Sig} 
of 2--signatures for \emph{simply--typed} syntax with reduction rules, and constructs an adjunction
between \textsf{Sig} and the category $\mathsf{2CCCat}$
of small cartesian closed 2--categories. He thus associates to any signature a 2--category of 
types, terms and reductions satisfying a universal property. 
More precisely, terms are given by morphisms in this category, and 
reductions are expressed by the existence of 2--cells between terms.
His approach differs from ours in the way in which variable binding is modelled:
Hirschowitz encodes binding in a Higher--Order Abstract Syntax (HOAS) style through exponentials.

\subsection{Synopsis}
In the first section we review the definition of (relative) monads 
and define modules over those monads as well as their morphisms.  
Some constructions on monads and modules are given, which will be of importance in what follows.

\noindent
In the second section we define arities, half--equations and inequations, as well as their representations.
Afterwards we state our main result. 
The running example in the first two sections is the 2--signature $\Lambda\beta$ of the lambda calculus with beta reduction.

\noindent
In the third section we describe some elements of the formalization of the main theorem in the proof assistant \textsf{Coq}.
 
\noindent
Some conclusions and future work are stated in the last section.

\section{Relative Monads \& Modules}\label{mon_mod}

This section presents the category--theoretic structures from which 
our models of a signature are built.

At first we review relative monads as defined by \textcite{DBLP:conf/fossacs/AltenkirchCU10}
and define modules for those monads (cf.\ Sec.\ \ref{sec:monads_rel_modules}).
Afterwards (cf.\ Secs.\ \ref{mod_examples} and \ref{subsec:deriv}) we port some constructions
from modules over (traditional) monads, 
as presented, for instance, by \textcite{DBLP:conf/wollic/HirschowitzM07}, \textcite{ju_phd}
and \textcite{ahrens_zsido}, 
to modules over relative monads.

\subsection{Modules over Relative Monads}\label{sec:monads_rel_modules}

We review the definition of relative monad as given by \textcite{DBLP:conf/fossacs/AltenkirchCU10}
and define suitable morphisms for them.
As an example, we consider the lambda calculus as a relative monad from sets to preorders, on the functor $\Delta :\Set\to\PO$.
Finally, we define \emph{modules} over relative monads and morphisms of such modules.

\begin{definition}[Relative Monad] 
Given categories $\C$ and $\D$ and a functor $F : \C\to \D$, a \emph{(relative) monad} $P : \C \stackrel{F}{\to}\D$ on $F$
is given by
\begin{itemize}
 \item a map $P\colon \C\to\D$ on the objects of $\C$,
 \item 
for each object $c$ of $\C$, a morphism $\eta_c\in \D(Fc,Pc)$ and
 \item 
for all objects $c$ and $d$ of $\C$ a \emph{Kleisli} operation 
 \[\sigma_{c,d}\colon \D (Fc,Pd) \to \D(Pc,Pd)\]
\end{itemize}
such that the following diagrams commute for all suitable morphisms $f$ and $g$:
\begin{equation*}
\begin{xy}
\xymatrix @=3pc{
Fc \ar [r] ^ {\we_c} \ar[rd]_{f} & Pc \ar[d]^{\kl{f}} \\
{} & Pd 
}
 \end{xy}\qquad
\begin{xy}
 \xymatrix@=3pc{ Pc \ar@/^1pc/[rd]^{\kl{\we_c}} \ar@/_1pc/[rd]_{\id}& {}\\ {} & Pc 
}
\end{xy}\qquad
 \begin{xy}
\xymatrix @=3pc{
 Pc \ar[r]^{\kl{f}} \ar[rd]_{\kl{\comp{f}{\kl{g}}}} & Pd\ar[d]^{\kl{g}} \\
  {} & Pe .\\
}
\end{xy}
%\label{mu}
\end{equation*}
 Here and later we omit the object indices of the Kleisli operation.
\end{definition}

\begin{remark}
A monad $P$ is equipped with a functorial structure (also denoted by $P$) by setting, for a morphism $f : a \to b$ in $\C$, 
\[
   P(f) := \lift_P(f) := \kl{\comp{Ff}{\we}} \enspace ,
\]
the functoriality axioms being a consequence of the monad axioms. 
\end{remark}

We are mainly interested in monads on a specific functor:

\begin{definition}[$\Delta:\Set\to\PO$]\label{def:delta}
 We call $\Delta: \Set\to\PO$ the functor from sets to preordered sets which associates  
 to each set $X$ the set itself together with the smallest preorder, i.e.\ the diagonal of $X$,
\[ \Delta(X) := (X,\delta_X) \enspace . \] 
The functor $\Delta$ is a full \emph{embedding}, i.e.\ it is fully faithful
and injective on objects.
Furthermore it is left adjoint to the forgetful functor $U:\PO\to\Set$,

\begin{equation*}   
       \begin{xy}
        \xymatrix@C=4pc{
                 **[l] \Set \rtwocell<5>_U^{\Delta}{'\bot} & **[r]\PO
}
       \end{xy} \enspace ,
\end{equation*}
 that is, the embedding $\Delta:\SET\to\PO$ is a coreflection.
We denote by $\varphi$ the family of isomorphisms
   \[   \varphi_{X,Y} : \PO(\Delta X, Y) \cong \Set(X,UY) \enspace . \]
We omit the indices of $\varphi$ whenever they can be deduced from the context.
\end{definition}

\begin{lemma}[Monads over $\Delta$ and Monads on $\Set$]\label{lem:rmon_delta_endomon}
  Let $P$ be a monad on $\Delta$.
 By postcomposing with the forgetful functor $U:\PO\to \Set$ we obtain a monad 
   \[\bar P : \Set\to \Set\enspace . \]
   The substitution is defined, for $f \in \SET(a,UPb)$ by setting
   \[ \sigma^{\bar P}(f):=  U \left(\kl{\varphi^{-1}f}\right) \enspace ,  \]
making use of the adjunction $f\mapsto \varphi^{-1}f \in \PO(\Delta a , Pb)$ of Def.\ \ref{def:delta}.
Conversely, to any monad $Q$ on $\Set$ we associate a relative monad by postcomposing with the functor $\Delta$:
  \[\Delta Q  : \Set\stackrel{\Delta}{\to}\PO  \enspace . \]
\end{lemma}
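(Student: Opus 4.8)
The plan is to move the (Kleisli--triple) data back and forth along the adjunction $\Delta\dashv U$ of Def.~\ref{def:delta}; write $\circ$ for composition and $a$ for the natural isomorphism $a_{X,Y}\colon\PO(\Delta X,Y)\cong\SET(X,UY)$. Two elementary facts about \emph{this} adjunction carry the whole argument. First, since $U\Delta X=X$ and the unit of $\Delta\dashv U$ is an identity, the transpose $a_{X,Y}$ is simply the action of $U$ on morphisms, $a_{X,Y}(h)=Uh$; dually $a^{-1}(g)$ is the \emph{unique} monotone map $\Delta X\to Y$ with underlying function $g$, which exists because $\Delta X$ carries the discrete preorder. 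Second, and as a consequence, $\Delta$ is full and faithful, so any morphism $f\colon\Delta X\to\Delta Y$ between discrete preorders equals $\Delta(Uf)$.

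For the first direction I would first check well--typedness of the announced data: $\eta^{\bar P}_c:=U\eta^P_c\in\SET(c,UPc)$, and for $f\in\SET(a,UPb)$ the map $\sigma^{\bar P}(f):=U\kl{a^{-1}(f)}$ lies in $\SET(UPa,UPb)$ --- exactly the formula in the statement. It then remains to verify, for $(\bar P,\eta^{\bar P},\sigma^{\bar P})$, the three axioms which are precisely the definition of a relative monad on $\id_{\SET}$, i.e.\ of an endomonad on $\SET$; each is obtained by applying the functor $U$ to the corresponding defining diagram of $P$. The left unit law $\sigma^{\bar P}(f)\circ\eta^{\bar P}_a=f$ is the first diagram $\kl{g}\circ\eta^P_a=g$ taken at $g=a^{-1}(f)$, together with $Ua^{-1}(f)=f$. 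The right unit law $\sigma^{\bar P}(\eta^{\bar P}_c)=\id$ is the second diagram $\kl{\eta^P_c}=\id_{Pc}$, once one notes $a^{-1}(\eta^{\bar P}_c)=a^{-1}(U\eta^P_c)=\eta^P_c$. Associativity is the third diagram $\kl{g'}\circ\kl{f'}=\kl{\kl{g'}\circ f'}$; the only extra step is the identity $a^{-1}\bigl(U\kl{a^{-1}(g)}\circ f\bigr)=\kl{a^{-1}(g)}\circ a^{-1}(f)$ in $\PO(\Delta a,Pe)$, which holds because both sides are monotone maps out of a discrete preorder with the same underlying function. (One checks in passing that the functor derived from this triple coincides with $U\circ P$, for the functorial structure $\lift_P$ on $P$.)

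For the converse I would put $(\Delta Q)(X):=\Delta(QX)$, $\eta^{\Delta Q}_c:=\Delta\eta^Q_c$, and, for $f\in\PO(\Delta c,\Delta Qd)$, $\sigma^{\Delta Q}(f):=\Delta\bigl(\sigma^Q(Uf)\bigr)$ --- well typed since $Uf\in\SET(c,Qd)$, using $U\Delta=\id$. Now every morphism in sight is $\Delta$ of a morphism of $\SET$ and $\Delta$ is a full embedding, so the three defining diagrams of $\Delta Q$ reduce to the images under $\Delta$ of the three endomonad axioms for $Q$; the only recurring subtlety is that a composite of the form $\Delta(h)\circ f$ with $f$ a map between discrete preorders must be rewritten as $\Delta(h\circ Uf)$ before $\sigma^{\Delta Q}$ is unfolded. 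One also sees that $\overline{\Delta Q}=Q$ on the nose, whereas $\Delta\bar P=P$ holds only when every $PX$ is already discrete; so ``conversely'' here names a construction in the other direction, not an inverse.

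There is no real obstacle in this lemma --- it is pure bookkeeping with the adjunction of Def.~\ref{def:delta}. The one point requiring care, and the only place where a careless argument could fail, is to keep the four maps $\Delta$, $U$, $a$, $a^{-1}$ strictly apart, i.e.\ to distinguish ``underlying function'' from ``(the unique) monotone map out of a discrete preorder''; with the two observations of the first paragraph recorded, every computation above is forced.
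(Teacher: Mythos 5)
Your proof is correct: the paper states this lemma without proof, and your verification — transporting the Kleisli-triple axioms along the adjunction of Def.~\ref{def:delta}, using that the unit is an identity and that $\Delta$ is a full embedding onto discrete preorders — is exactly the routine check the paper leaves implicit, including the correct observation that the two constructions are adjoint rather than mutually inverse (cf.\ Lem.~\ref{lem:adjunction_rmon_delta_mon}).
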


\noindent
The above construction in Lem.\ \ref{lem:rmon_delta_endomon} actually is an instance of a general lemma: 
given an adjunction $F\dashv G$ such that $GF = \Id_C$, one can associate 
a relative monad on $F$ to any monad on $\C$ and vice versa.
These maps are functorial, and one obtains an adjunction between a suitable category
of relative monads on $F$ and a category of monads on $\C$.
Details will be reported elsewhere.

\begin{example}\label{ex:ulcbeta}
Consider the set of all lambda terms indexed by their set of free variables
\begin{align*} 
\ULC (V) ::=\quad &\Var : V \to \ULC(V) \\ 
        {}\mid{} &\Abs : \ULC (V') \to \ULC (V) \\ 
        {}\mid{} &\App : \ULC(V) \times \ULC(V)\to \ULC(V) \enspace ,
\end{align*}
where $V' := V + \lbrace * \rbrace$ is the set $V$ enriched with a new distinguished element, 
i.e.\ a context extended by one additional free variable.
We occasionally write $\lambda$ for $\Abs$ and denote application by juxtaposition.
\textcite{alt_reus}
interpret $\ULC$ as a monad over sets.
\textcite{DBLP:conf/fossacs/AltenkirchCU10} interpret $\ULC$ as a relative monad on the inclusion functor
$i:\Fin\to\Set$, by restricting contexts to be given by \emph{finite} sets (cf.\ also Rem.\ \ref{rem:finite_contexts}).

We equip each $\ULC(V)$ with a preorder taken as the reflexive--transitive closure of the 
relation generated by the rule
\[ \tag{$\beta$}  \lambda M (N) \enspace \leq \enspace M [*:= N] \enspace  \]
and its propagation into subterms.
 This defines a monad 
  from sets to preorders
    \[\ULCB : \SET\stackrel{\Delta}{\to}\PO.\] 
 The family $\we^{\ULC}$ is given by the constructor $\Var$, and the Kleisli operation 
  \[\sigma_{X,Y} : \PO\bigl(\Delta X, \ULCB(Y)\bigr) \to \PO\bigl(\ULCB(X),\ULCB(Y)\bigr)\]
 is given by simultaneous substitution. 
Via the adjunction of Def.\ \ref{def:delta} the substitution can also be read as
\begin{equation*}\sigma_{X,Y} : \SET\bigl(X, \ULC(Y)\bigr) \to \PO\bigl(\ULCB(X),\ULCB(Y)\bigr) \enspace .
\end{equation*}
The substitution can hence be chosen as for the monad $\ULC$, but one has to prove the additional property of monotonicity in
the first--order argument.
\end{example}

Inspired by the example above we sometimes call the Kleisli operation of a monad \emph{monadic substitution}, 
even for monads that do not denote the terms of a language over free variables.

For two monads $P$ and $Q$ from $\C$ to $\D$ a \emph{morphism of monads} is a
family of morphisms $\tau_c\in\D(Pc,Qc)$
that is compatible with the monadic structure:

\begin{definition}[Morphism of Relative Monads]\label{def:morph_of_relmons}
Given two relative monads $P$ and $Q$ from $\C$ to $\D$ on the same functor $F\colon\C\to\D$, 
 a \emph{morphism of monads} from $P$ to $Q$ is given by a collection of morphisms $\tau_c\in \D(Pc,Qc)$ 
such that the following diagram commutes for any suitable morphism $f$:
\begin{equation*}
 \begin{xy}
  \xymatrix @=3pc{
  Pc \ar[r]^{\kl[P]{f}} \ar[d]_{\tau_c}& Pd \ar[d]^{\tau_d} & Fc \ar[r]^{\we^P_c} \ar[rd]_{\we^Q_c} & Pc \ar[d]^{\tau_c} \\
  Qc \ar[r]_{\kl[Q]{\comp{f}{\tau_d}}} & Qd & {} & Qc\\
}
 \end{xy}
\end{equation*}
As a consequence from these commutativity properties the family $\tau$ is a natural transformation between the functors induced by the monads $P$ and $Q$.
\end{definition}

Monads on $F:\C\to\D$ and their morphisms form a category $\RMon{F}$  where 
identity and composition of morphisms are defined by pointwise identity and composition of morphisms. 
We have a similar category of traditional monads:
\begin{definition}[Category of Monads]
 We denote by $\Mon{\C}$ the category of monads on $\C$ and morphisms of such monads.
 More precisely, a morphism $f: P\to Q$ in $\Mon{\C}$ is given by a family 
 $\tau_c : \C(Pc,Qc)$
 of morphisms that is compatible with the monadic structure, analogously to the diagrams of Def.\ \ref{def:morph_of_relmons}.
\end{definition}

\begin{lemma}[Adjunction between $\RMon{\Delta}$ and $\Mon{\Set}$]\label{lem:adjunction_rmon_delta_mon}
 The maps defined in Lem.\ \ref{lem:rmon_delta_endomon} %are the object maps of 
     give rise to 
  an adjunction between the category of monads over $\Delta$ and the category of monads over sets,
  where the functor $U_*$ is defined on objects as in Lem.\ \ref{lem:rmon_delta_endomon} by $U_*(R):=\bar{R}$:
\begin{equation*}   
       \begin{xy}
        \xymatrix@C=4pc{
                   **[l]\Mon{\Set} \rtwocell<5>_{U_*}^{\Delta_*}{'\bot} & **[r]\RMon{\Delta}
}
       \end{xy} \enspace .
\end{equation*}
\end{lemma}

\noindent
Informally, the notion of relative monad on a base functor $F$ 
is obtained from the notion of monad (in Manes style, i.e.\ with Kleisli operation) 
by inserting applications of the functor $F$ where necessary \parencite{DBLP:conf/fossacs/AltenkirchCU10}.
Similarly, one obtains the notion of module over a relative monad from the notion of module over a monad
--- in form of a Kleisli operation as presented, for instance, by \textcite{ahrens_zsido} ---  by inserting applications of the base functor $F$:

\begin{definition}[Module over Relative Monad]
Let $P\colon\C\stackrel{F}{\to}\D$ be a relative monad and let $\E$ be a category. A \emph{relative module $M$ over $P$ with codomain $\E$} is given by
\begin{itemize}
 \item a map $M\colon \C \to \E$ on the objects of the categories involved and 
 \item for all objects $c,d$ of $\C$ a map 
      \[ 
          \varsigma_{c,d}\colon \D (Fc,Pd) \to \E (Mc,Md)
      \]
\end{itemize}
such that the following diagrams commute for all suitable morphisms $f$ and $g$:
\begin{equation*}
\begin{xy}
\xymatrix @=3pc{
 Mc \ar[r]^{\mkl{f}} \ar[rd]_{\mkl{\comp{f}{\kl{g}}}} & Md\ar[d]^{\mkl{g}} & Mc  \ar@/^1pc/[rd]^{\mkl{\we_c}} \ar@/_1pc/[rd]_{\id} & {} \\
  {} & Me & {} & Mc . 
}
\end{xy}
%\label{mu}
\end{equation*}
\end{definition}

\noindent
Functoriality for such a module $M$ is defined in a similar way as for monads, by setting,
  for any morphism $f : c \to d$ in $\C$,
 \[ M(f) := \rmlift_M(f) := \varsigma({ \comp{Ff}{\eta}}) \enspace .\]

\begin{example}[Monoids and Monoid Actions]
 An anonymous referee suggested the following example: let $F : 1 \to \Set$ be the functor on the one--object category
  which maps to the final object, i.e.\ the singleton set.
 Then a relative monad on $F$ is a monoid.
 Given a relative monad $G$ on $F$, a $G$--module with codomain $\Set$ is a monoid action of $G$.
\end{example}

The following examples of modules are instances of more general constructions explained in the next section:

 \begin{example}[Tautological Module]\label{ex:ulcb_taut_mod}
  The map $\ULCB : V \mapsto \ULCB(V)$ yields a module over the monad $\ULCB$, the \emph{tautological module} $\ULCB$.
 \end{example}
 
 \begin{example}[Derived Module]\label{ex:ulcb_der_mod}
  Let $V' := V + \{*\}$. The map $\ULCB' : V \mapsto \ULCB(V')$ inherits the structure of an $\ULCB$--module from the
   tautological $\ULCB$--module $\ULCB$ of Ex.\ \ref{ex:ulcb_taut_mod}.
   We call $\ULCB'$ the \emph{derived module} of the module $\ULCB$. 
 \end{example}
 \begin{example}[Product of Modules]\label{ex:ulcb_prod_mod}
  The map $V\mapsto \ULCB(V)\times\ULCB(V)$ inherits an $\ULCB$--module structure from $\ULCB$.
 \end{example}

A \emph{module morphism} is given by a family of morphisms that is compatible with module substitution:
\begin{definition}[Module Morphism]
 Let $P$ be a relative monad on $F:\C\to\D$, and 
 let $M$ and $N$ be two relative modules over $P$ with codomain $\E$. 
A \emph{morphism of relative $P$--modules} from $M$ to $N$ is given by a collection of morphisms $\rho_c\in\E(Mc,Nc)$ 
such that for any morphism $f\in \D(Fc,Pd)$ the following diagram commutes:
\begin{equation*}
 \begin{xy}
  \xymatrix @=3pc{
  Mc \ar[r]^{\mkl[M]{f}} \ar[d]_{\rho_c}& Md \ar[d]^{\rho_d}  \\
  Nc \ar[r]_{\mkl[N]{f}} & Nd \\
}
 \end{xy}
\end{equation*}
\end{definition}

\noindent
The modules over $P$ with codomain $\E$ and morphisms between them form a category called $\RMod (P,\E)$. 
Note that the ``R'' here stands for ``relative'', not for ``right'' as opposed to ``left''.
Composition and identity morphisms of modules are defined by pointwise composition and identity, similarly to the category of monads.

\begin{example}[Exs.\ \ref{ex:ulcb_taut_mod}, \ref{ex:ulcb_der_mod}, \ref{ex:ulcb_prod_mod} cont.]\label{ex:ulcb_constructor_mod_mor}
 Abstraction and application are morphisms of $\ULCB$--modules,
  \begin{align*} \Abs &: \ULCB' \to \ULCB \enspace , \\
                \App &: \ULCB \times \ULCB \to \ULCB \enspace .
  \end{align*}
\end{example}

\subsection{Constructions on Monads and Modules}\label{mod_examples}

The constructions on modules over monads as used by 
 \textcite{DBLP:conf/wollic/HirschowitzM07} 
 carry over to relative monads:

\begin{definition}[Tautological Module]
 Given a monad $P$ on $F:\C\to\D$, 
 we define the \emph{tautological module} (also denoted by $P$) over $P$ 
 to be the module $(M, \varsigma) := (P,\sigma)$, i.e.\ with object map $P$ and module substitution
  given by the monad substitution.
  Thus the monad $P$ can be considered as an object in the category $\RMod(P,\D)$.
\end{definition}

\begin{definition}[Constant and terminal module] \label{def:const_mod}
Let $P$ be a monad on $F:\C\to\D$.
For any object $e \in \E$ the constant map $T_e\colon\C\to\E$, $c\mapsto e$ for all $c\in \C$ 
yields a $P$--module $(T_e, \id)$.
In particular, if $\E$ has a terminal object $1_\E$, then the constant module $(T_{1_{\E}},\id)$ is terminal in $\RMod(P,\E)$.
\end{definition}

\begin{definition}[Postcomposition with a functor]
 Let $P$ be a monad on $F:\C\to\D$, and let $M$ be a $P$--module with codomain $\E$.
 Let $G: \E \to \X$ be a functor. Then the object map $\comp{M}{G}:\C\to \X$ defined by $c\mapsto G(M(c))$ 
 is equipped with a $P$--module structure by setting, for $c, d \in \C$ and $f\in \D(Fc,Pd)$,
  \[  \varsigma^{\comp{M}{G}}(f) := G(\varsigma^M(f)) \enspace . \]
 For $M:=P$ and $G$ a constant functor mapping to an object $x\in \X$ and its identity morphism $\id_x$, 
 we obtain the constant module $(T_x,\id)$ as in Def.\ \ref{def:const_mod}. 
\end{definition}

Let $P$ and $Q$ be two monads on $F : \C \to \D$. 
Given a monad morphism $h : P \to Q$, we can turn any 
$Q$--module $M$ into a $P$--module,
by ``pulling it back'' along $h$:

\begin{definition}[Pullback module] 
Let $h \colon P \to Q$ be a morphism of monads on $F:\C\to\D$ 
and let $M$ be a $Q$--module with codomain $\E$. We define a $P$--module $h^* M$ to $\E$ with object map $c\mapsto M c$ 
by setting 
\[\mkl[h^*M] f := \mkl[M]{\comp {f}{h_d}} \enspace . \] 
This module is called the \emph{pullback module of $M$ along $h$}. The pullback extends to module morphisms and is functorial.
\end{definition}

\begin{remark}
 The pullback $P$--module $h^*M$ has the same underlying functor as the $Q$--module $M$. It is merely the substitution action that 
 changes: while $\varsigma^M$ expects morphisms in $\C(Fc,Qd)$ as arguments, the substitution of $h^*M$ expects morphisms
 in $\C(Fc,Pd)$.
\end{remark}

\begin{definition}[Induced module morphism] \label{def:ind_mod_hom}
With the same notation as before, the monad morphism $h$ induces a morphism of $P$--modules $h\colon P \to h^*Q$.
\end{definition}

\begin{remark}
 Note that the preceding two constructions do not change the functor resp.\ natural transformation underlying the
 module resp.\ morphism of modules. 
\end{remark}

The following construction explains Ex.\ \ref{ex:ulcb_prod_mod}:
\begin{definition}[Products] 
Suppose the category $\E$ is equipped with a product. Let $M$ and $N$ be $P$--modules with codomain $\E$. Then the map (on objects)
\[ \C\to\E, \quad c \mapsto Mc \times Nc \] 
 is equipped with a module substitution by setting
 \[  \mkl[M\times N]{f} := \mkl[M]{f} \times \mkl[N]{f} \enspace . \]
This construction extends to a product on $\RMod(P,\E)$.
\end{definition}

There is also a category where modules over different monads are grouped together:
\begin{definition}\label{def:lmod}
Let $\C,\D$ and $\E$ be categories and $F:\C\to\D$ be a functor. 
We define the category $\LMod(F,\E)$ (``L'' for ``large'') to be the category whose objects are pairs $(R,M)$, 
where $R$ is a monad over $F : \C \to \D$ and $M$ is an $R$--module.
A morphism from $(R,M)$ to $(S,N)$ is a pair $(\rho, \tau)$ where $\rho : R \to S$ is a monad morphism and $\tau$ is an
$R$--module morphism $\tau : M \to \rho^* N$. 
\end{definition}

\vspace{1.5ex}

We are particularly interested in monads over the functor $\Delta: \SET\to\PO$.
The following construction --- \emph{derivation} --- applies to modules over such monads. 

\subsection{Derived Modules}\label{subsec:deriv}

Roughly speaking, a binding constructor makes free variables disappear. 
Its input are hence terms in an extended context, i.e.\ with (one or more) additional free variables 
compared to the output. 
\emph{Derivation} is about context extension.

Formally, given a set $V$ ($V$ for variables), we consider a new set \[V' := V + \{*\}\]
which denotes $V$ enriched with a new distinguished element -- the ``fresh'' variable. 
The map $V\mapsto V'$ can be extended to a monad on the category of sets and is hence functorial.
Given a map $f:V\to W$ and $w\in W$, we call 
    \[ \defaultmap(f,w) : V' \to W\] 
the coproduct map defined by
 
 \[ \defaultmap(f,w) = [f,\lambda x.w]  \enspace . \]

\begin{definition}[Derived Module]\label{def:deriv_one}
Given a monad $P$ on $\Delta:\SET\to\PO$ and a $P$--module $M$ with codomain $\E$,
we define the \emph{derived module} by setting
\[  M'(V) := M (V') \enspace . \]
For a morphism $f\in\PO(\Delta V,PW)$ the module substitution for the derived module is given by
\[ \mkl[{M'}]{f} := \mkl[M]{\mathrm{shift}(f)} \enspace . \]
Here the shifted map \[\mathrm{shift}(f) \in \PO(\Delta(V'),P(W'))\] 
is defined via the adjunction of Def.\ \ref{def:delta} as 
\[ \mathrm{shift}(f) := \varphi^{-1}\Bigl(\defaultmap\bigl(\comp{f}{P(\inl)},\we(\inr(*))\bigr)\Bigr) \enspace ,  \]
where $[\inl,\inr] = \id_{W'}$.
\end{definition}
\noindent
Derivation extends to an endofunctor on the category of $P$--modules with codomain $\E$. 

\begin{remark}
When $P$ is a monad of terms over free variables, 
the map $\mathrm{shift}f$ sends the additional variable of $V'$ to $\eta^{P}(*)$, 
i.e.\ to the term consisting of just the ``freshest'' free variable.
When recursively substituting with a map $f : V \to P(W)$, terms under a binder such as $\lambda$ must be substituted 
with the map $\mathrm{shift}(f)$.
\end{remark}

\begin{definition}\label{def:alg_notation}
Given a natural number $n$, we write $M^n$ for the module $M$ derived $n$ times.
Given a list $s = [n_1,\ldots,n_m]$ of natural numbers, we write $M^s := M^{n_1} \times \ldots \times M^{n_m}$.

Product and derivation are functorial, and we use the same notation \emph{on morphisms}. That is,
given a morphism of $P$--modules $\rho : M \to N$,
we write \[\rho^s := \rho^{n_1} \times \ldots \times \rho^{n_m} : M^s \to N^s \enspace . \]
 
\end{definition}

\noindent
The pullback operation commutes with products and derivations :

\begin{lemma} \label{pb_prod}Let $\C$ and $\D$ be categories and $\E$ be a category with products. Let $P$ and $Q$ %\colon \C\to \D$ and $Q\colon \C\to D$ 
be monads
on $F:\C\to\D$ and $\rho : P \to Q$ a monad morphism. 
Let $M$ and $N$ be $P$--modules with codomain $\E$. The pullback functor is cartesian:
 \[ \rho^* (M \times N) \cong \rho^*M \times \rho^*N \enspace .\]
\end{lemma}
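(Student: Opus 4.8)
The plan is to prove the equality of $P$-modules $\rho^*(M\times N) = \rho^*M\times\rho^*N$ by checking that the two sides agree as objects of $\RMod(P,\E)$, i.e.\ that they have the same underlying object map $\C\to\E$ and the same module substitution action. Since module equality here is equality of the underlying data (object map together with the substitution operation, the axioms being propositions), it suffices to verify these two components.

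First I would observe that both sides have object map $c\mapsto Mc\times Nc$: the product module $M\times N$ has object map $c\mapsto Mc\times Nc$ by the definition of products on $\RMod(P,\E)$, and the pullback functor $\rho^*$ leaves the underlying object map (and indeed the underlying functor) unchanged, as recorded in the Remark following the definition of the pullback module. On the other hand, $\rho^*M$ has object map $c\mapsto Mc$ and $\rho^*N$ has object map $c\mapsto Nc$, so $\rho^*M\times\rho^*N$ also has object map $c\mapsto Mc\times Nc$. Hence the object maps coincide.

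Next I would compare the substitution actions on a morphism $f\in\D(Fc,Pd)$. Unfolding the left-hand side: by definition of the pullback module, $\mkl[{\rho^*(M\times N)}]{f} = \mkl[{M\times N}]{\comp{f}{\rho_d}}$, and then by the definition of products of modules this equals $\mkl[M]{\comp{f}{\rho_d}}\times\mkl[N]{\comp{f}{\rho_d}}$. Unfolding the right-hand side: by the definition of products of modules, $\mkl[{\rho^*M\times\rho^*N}]{f} = \mkl[{\rho^*M}]{f}\times\mkl[{\rho^*N}]{f}$, and then by the definition of the pullback module applied to each factor this equals $\mkl[M]{\comp{f}{\rho_d}}\times\mkl[N]{\comp{f}{\rho_d}}$. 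The two expressions are syntactically identical, so the substitution actions agree, and therefore the two modules are equal.

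There is essentially no hard part: this is a bookkeeping exercise in unfolding the two definitions involved (pullback module and product module) and noting they commute on the nose because the pullback merely precomposes the argument with $\rho_d$ while the product acts componentwise, and these two operations are independent. The only point worth a word of care is that this is a genuine equality, not merely an isomorphism, which holds because the pullback functor does not touch the underlying object map — so no coherence isomorphism needs to be inserted. (One could additionally remark that the statement extends to module morphisms, making $\rho^*$ a cartesian functor, but for the stated lemma the object-level equality is all that is required.)
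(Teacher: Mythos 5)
Your proof is correct: the paper states this lemma without proof, treating it as immediate, and your verification — that both sides have object map $c\mapsto Mc\times Nc$ and that unfolding the pullback and product substitutions in either order yields $\mkl[M]{\comp{f}{\rho_d}}\times\mkl[N]{\comp{f}{\rho_d}}$ — is exactly the routine check being elided. (The only cosmetic point: the statement says ``$P$--modules'' where it must mean $Q$--modules for $\rho^*$ to apply, and your argument correctly reads it that way.)
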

\begin{lemma} \label{pb_comm} 
 Let $P$ be a relative monad on $\Delta:\Set\to\PO$ and $M$ a module over $P$ with codomain $\E$. 
 Then we have
\[ \rho^* (M') \cong (\rho^*M)' \enspace . \]
\end{lemma}

\begin{definition}[Substitution of \emph{one} variable]\label{ex:subst}
 We denote by $\PS$ (``w'' for ``weak'') the category whose objects are preordered sets, and
 where a morphism from $A$ to $B$ is a map between the underlying sets, 
 that is, a morphism $UA \to UB$ in $\Set$.
 
Given a monad $P : \SET\stackrel{}{\to}\PO$ on $\Delta$, and a $P$--module $M$ with codomain $\PO$, 
we can consider $M$ as a $P$--module with codomain $\PS$ by postcomposing with the injection.
 We denote this module by $\hat{M} \in \RMod(P,\PS)$. 
 For any set $X$, we define the substitution of just one variable,
 \begin{align*} \subst_X : P(X') \times P(X) &\to P(X) \enspace, \\  
                              (y,z) \enspace &\mapsto \enspace y [*:= z] := \kl{\defaultmap(\we_X , z)}(y) \enspace . 
 \end{align*}
 This defines a morphism of $P$--modules with codomain $\PS$,
\[\subst^P :  \hat{P}' \times \hat{P}\to \hat{P} \enspace .\] 
\end{definition}

\begin{remark}
 Note that the substitution module morphism defined above is by construction monotone in
 its first argument, but not in its second argument.
 This is the reason why we cannot consider $\subst^P$ as a morphism of $P$--modules
 \[ \subst^P : P' \times P \to P \enspace , \]
but have to switch to the category $\PS$.
 This fact and a way to ensure monotonicity also in the second argument 
      are explained more generally in Rem.\ \ref{rem:about_substitution}.
\end{remark}

\section{2--Signatures and their Representations}\label{sig_rep}

In this section we define \emph{2--signatures} and their representations.
We then prove that, given any 2--signature, its associated category of representations has an initial object, the 
language generated by the 2--signature.

An \emph{arity} describes the number of arguments and binding behaviour of a constructor of a syntax.
A \emph{1--signature} $S$ is a family of arities and as such specifies the terms of a language.
An \emph{inequation over $S$} --- also called \emph{$S$--inequation} --- expresses a reduction rule on 
the terms of the syntax associated to $S$.  
A \emph{2--signature} is given by  a 1--signature $S$ and a set of $S$--inequations. 

We define \emph{representations} of a 1--signature $S$ analogously to  \textcite{DBLP:conf/wollic/HirschowitzM07} and \textcite{ju_phd},
except that we use \emph{relative} monads and modules over such monads.
Afterwards we use  \emph{inequations over $S$} to specify reduction rules, and we consider those representations of $S$
that satisfy the given inequations.
We show that among those representations there is an initial representation, which integrates the terms 
generated by $S$, equipped with reductions according to the given inequations.
Our inequations are precisely \citeauthor{journals/corr/abs-0704-2900}'s (\citeyear{journals/corr/abs-0704-2900}) \emph{equations},
i.e.\ parallel pairs of half--equations.
We simply interpret such a pair to define a \emph{(directed) reduction} rather than an equality.
Throughout this section the running example is the 2--signature $\Lambda\beta$ which specifies 
the lambda calculus with beta reduction.

\subsection{Arities, 1--Signatures and their Representations}

We first consider pure syntax, i.e.\ syntax without reductions.
We specify syntax by a \emph{1--signature}:

\begin{definition}[Arity, 1--Signature]\label{def:1--signature}
 An \emph{arity} is a list of natural numbers.
A \emph{1--signature} is a family of arities.
\end{definition}

Intuitively, the length of an arity specifies the number of arguments of a constructor, and
the $i$--th entry of the arity specifies the number of variables which are bound by
the constructor in the $i$--th argument. 

\begin{example} \label{ex:sig_ulc}
The 1--signature $\Lambda$ of untyped lambda calculus is given by the two arities
\[ \app := [0,0] \enspace , \quad \abs := [1] \enspace . \]
\end{example}

\begin{definition}
 Let $s := [n_1,n_2,\ldots,n_m]$ be an arity and $P$ be a monad on the functor $\Delta:\Set\to\PO$.
 We call
     \[ \dom(s,P) := P^s = P^{n_1} \times \ldots \times P^{n_m} \]
 the \emph{domain module} of $s$ for $P$.
 Note that we use the notation defined in Def.\ \ref{def:alg_notation}.
\end{definition}

\begin{definition}[Representations of a 1--Signature]\label{def:rep_in_rmonad}
A \emph{representation $R$ of a 1--signature} $S$ is given by a monad $P$ over the functor $\Delta:\Set\to\PO$ 
and, for each arity $s = [n_1,n_2,\ldots,n_m] \in S$, a morphism of $P$--modules 
\[s^R : \dom(s,P) \to P \enspace . \]
 Given a representation $R$, we denote its underlying monad by $R$ as well. 
\end{definition}

\begin{example} [Ex. \ref{ex:sig_ulc} continued] \label{ex:ulc_rep}
A representation $R$ of the 1--signature $\Lambda$ is given by
\begin{itemize}
 \item a monad $R : \SET\stackrel{\Delta}{\to}\PO$ and
 \item two morphisms of $R$--modules in $\RMod(R,\PO)$,
       \[ \app^R : R \times R \to R \quad\text{and}\quad \abs^R : R' \to R \enspace .\] 
\end{itemize}

\end{example}

\begin{remark}
 A representation of a 1--signature \`a la \textcite{DBLP:conf/wollic/HirschowitzM07}
 is defined analogously, except for the use of (plain) monads on the category of sets and modules thereon
 instead of \emph{relative} monads and modules on relative monads.
\end{remark}

\begin{definition}\label{lem:rep_endo_rep_rel}
 To any representation $R$ of a 1--signature $S$ in a relative monad $R$ as defined in Def.~\ref{def:rep_in_rmonad} 
we associate a representation $U_*(R)$ of $S$ in the monad $\bar{R}$ (cf.\ Lem.\ \ref{lem:rmon_delta_endomon}) 
in the sense of \textcite{DBLP:conf/wollic/HirschowitzM07} 
by postcomposing the representation module morphism of any arity $s$ of $S$ with the forgetful functor from preorders to sets:

\[ s^R : R^s \to R \quad \mapsto \quad s^{U_*R} :  {\bar R}^s \to \bar R \enspace , \]
where we use $\bar {(R ^s)} = ({\bar R})^s$.
  Conversely, to any representation $Q$ of $S$ in a monad $Q$ over sets we can associate a representation $\Delta_{*} Q$
 of $S$ in the relative monad $\Delta_* Q$ on $\Delta$, by postcomposing the representation module morphisms with $\Delta$.

\end{definition}

Morphisms of representations are monad morphisms which commute with the representation morphisms of modules:
\begin{definition}[Morphism of Representations]\label{def:morph_of_reps}
Let $P$ and $Q$ be representations of a signature $S$. A \emph{morphism of representations} $f: P \to Q$ 
is a morphism of monads $f: P\to Q$ such that the following diagram commutes for any arity $s$ of S: 
\[
\begin{xy}
\xymatrix @=3pc{
  P^s\ar[r]^{s^P} \ar[d]_{f^s} & P \ar[d]^{f}  \\
  f^* Q^s \ar[r]_{f^* s^Q} & f^*Q. \\
}
\end{xy}
\]
\end{definition}

\noindent
Note that we make extensive use of the notation defined in Def.\ \ref{def:alg_notation}.
To make sense of this diagram it is necessary to recall the constructions on modules of section \ref{mod_examples}. 
The diagram lives in the category $\RMod(P,{\PO})$. 
The vertices are obtained from the tautological modules $P$ resp.\ $Q$ over the monads $P$ resp.\ $Q$ by applying 
the derivation and pullback functors as well as by the use of the product in the category of $P$--modules into $\PO$.
The vertical morphisms are module morphisms induced by $f$, to which functoriality of derivation and products are applied. 
Furthermore, instances of Lems.\ \ref{pb_prod} and \ref{pb_comm} are hidden in the lower left corner. 
The lower horizontal morphism makes use of the functoriality of the pullback operation.

\begin{example}[Ex. \ref{ex:ulc_rep} continued]
  Let $P$ and $R$ be two representations of $\Lambda$. 
A \emph{morphism} from $P$ to $R$ is given by a morphism of monads
 $f : P \to R$ such that the following diagrams of $P$--module morphisms commute:
\begin{equation*}
\begin{xy}
 \xymatrix @C=4pc  {
   **[l]P\times P \ar[r]^{\app^P} \ar[d]_{f \times f}& P \ar[d]^{f} & & **[l] P' \ar[r]^{\abs^P} \ar[d]_{f'}& P\ar[d]^{f} \\
   **[l]f^*(R\times R) \ar[r]_{f^*(\app^R)} & f^*R & & **[l]f^* R' \ar[r]_{f^*(\abs^R)}& f^*R.
}
\end{xy}
\end{equation*}
\end{example}

\noindent
Composition and identity morphisms of representations are given by composition and identity
of monad morphisms. We obtain a category of representations:

\begin{definition}[Category of Representations]\label{lem:cat_of_1-reps}
 Representations of $S$ and their morphisms form a category $\Rep^{\Delta}(S)$.
 \end{definition}

Since we are not considering any reductions on terms yet, but only plain syntax, it comes as no surprise that,
for any 1--signature $S$, 
our category of representations of $S$ relates to \citeauthor{DBLP:conf/wollic/HirschowitzM07}'s:

\begin{lemma}\label{lem:adjunction_reps}
 The assignment of Def.\ \ref{lem:rep_endo_rep_rel} extends to an adjunction between our category of representations $\Rep^{\Delta}(S)$ in relative monads on $\Delta$ 
 and \citeauthor{DBLP:conf/wollic/HirschowitzM07}'s category $\Rep(S)$ of representations in monads over sets:

  \begin{equation*}   
       \begin{xy}
        \xymatrix@C=3pc{
                   **[l]\Rep(S)\rtwocell<5>_{U_*}^{\Delta_*}{'\bot} & **[r]\Rep^{\Delta}(S)
}
       \end{xy} \enspace . 
\end{equation*}
\end{lemma}

\begin{lemma}\label{lem:initial_wo_ineq}
  Given a signature $S$, the category of representations of $S$ in relative monads on $\Delta$ has an initial representation.
Its underlying monad associates to any set $V$ of variables the
set of terms of the language specified by $S$ in the context $V$, equipped with the diagonal preorder.
\end{lemma}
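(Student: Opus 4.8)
The plan is to obtain initiality of $\Rep^{\Delta}(S)$ by transporting the analogous statement for endomonads along the adjunction of Lemma~\ref{lem:adjunction_reps}. In that adjunction
\[
 a_{P,Q} : \Rep^{\Delta}(S)(\Delta P, Q) \cong \Rep(S)(P, \bar Q) \enspace ,
\]
the functor $P \mapsto \Delta P$ (Def.~\ref{lem:rep_endo_rep_rel}) from $\Rep(S)$ to $\Rep^{\Delta}(S)$ is the \emph{left} adjoint, the right adjoint being the functor $U$ of the introduction. Since left adjoints are cocontinuous, they preserve initial objects (an initial object being the colimit of the empty diagram); hence, as soon as we know that $\Rep(S)$ has an initial object $I$, its image $\Delta I$ is initial in $\Rep^{\Delta}(S)$.

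It therefore remains to invoke the purely syntactic initiality result for representations in endomonads: by Hirschowitz and Maggesi \cite{DBLP:conf/wollic/HirschowitzM07} (see also Zsid\'o \cite{ju_phd}), the category $\Rep(S)$ of representations of the $1$--signature $S$ in endomonads over $\Set$ has an initial object $I$, whose underlying endomonad $\bar I$ sends a set $V$ to the set of terms of the syntax specified by $S$ with free variables in $V$, with $\eta$ given by the variable--as--term inclusion and substitution given by simultaneous substitution. One has to check only that our notion of $1$--signature (arities as lists of natural numbers, a representation of $s=[n_1,\dots,n_m]$ being a module morphism $R^s \to R$) matches the formulation used there; this is immediate from Def.~\ref{def:alg_notation} and Def.~\ref{def:rep_arity}.

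Finally I would identify the underlying monad of $\Delta I$. By Def.~\ref{lem:rep_endo_rep_rel} and Lem.~\ref{lem:rmon_delta_endomon}, the underlying relative monad of $\Delta I$ is $\Delta \bar I$, i.e.\ the endomonad of terms postcomposed with the functor $\Delta$ of Def.~\ref{def:delta}; explicitly it sends $V$ to the set of $S$--terms over $V$ equipped with the diagonal preorder, which is exactly the second claim of the lemma.

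\textbf{Main obstacle.}
The only genuinely delicate point is matching the cited initiality theorem to the present setup; conceptually the rest is formal. An alternative, self--contained route — the one underlying the \textsf{Coq} formalization — is to construct the initial representation directly: take the inductive family $V \mapsto \widehat S(V)$ generated by the constructors prescribed by $S$, endow each $\widehat S(V)$ with the diagonal preorder and the evident simultaneous substitution (automatically monotone, since the preorder is discrete), verify the monad laws, equip it with the obvious module morphisms $\widehat S^{\,s} \to \widehat S$ for each arity $s \in S$, and prove by structural induction on terms that for any representation $R$ there is a unique morphism of representations $\widehat S \to R$. There the bookkeeping — compatibility of the iterated map with derivation, products and substitution — is the laborious part, but each step is routine.
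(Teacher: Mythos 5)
Your proof is correct and follows the paper's own argument exactly: the paper likewise obtains the initial object of $\Rep^{\Delta}(S)$ as the image under the left adjoint $\Delta$ of Lemma~\ref{lem:adjunction_reps} of the Hirschowitz--Maggesi initial representation in $\Rep(S)$, using that left adjoints preserve initial objects. Your closing remark about a direct construction also matches the paper, which notes that the \textsf{Coq} formalization builds the initial object from scratch rather than via the adjunction.
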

\begin{proof}
 This is a direct consequence of the fact that left adjoints preserve colimits \parencite[Chap.\ V.5]{maclane}, thus, in particular, initial objects.
\end{proof}

\begin{remark}
 The formalization in \textsf{Coq} of Lem.\ \ref{lem:initial_wo_ineq} (cf.\ Sec.\ \ref{subsec:1-sig-formal}) does not appeal to 
 \citeauthor{DBLP:conf/wollic/HirschowitzM07}'s result, but constructs the initial object from scratch.
\end{remark}

\subsection{Inequations over 1--Signatures}

Consider the beta rule of lambda calculus,

\[ \lambda M(N) \leadsto M[*:=N] \enspace . \]

\noindent
We would like to express such a rule through a suitable inequation.
In our formalism, abstraction and application are considered as morphisms of modules 
(cf.\ Ex.\ \ref{ex:ulcb_constructor_mod_mor}), 
and so is substitution (cf.\ Def.\ \ref{ex:subst}).
This suggest to define (in)equations over a 1--signature $S$ 
as \emph{parallel pairs} of module morphisms, indexed by representations of $S$.
Put differently, an (in)equation associates a parallel pair of module morphisms to any representation of $S$.
\textcite{journals/corr/abs-0704-2900} specify equations through such pairs of (indexed) module morphisms over (plain) monads.
We adapt their definition to our use of \emph{relative} monads and modules over such monads.
Afterwards we simply interpret a pair of half--equations as \emph{inequation} rather than equation.

\begin{definition}[Category of Half--Equations]\label{def:cat_halfeq}
Let $S$ be a signature. An \emph{$S$--module} $U$
is a functor from the category of representations of $S$ to the category of modules $\LMod (\Delta,\PS)$
commuting with the forgetful functor to the category of relative monads on $\Delta$:
\[
 \begin{xy}
  \xymatrix{
                \Rep^{\Delta}(S) \ar[rr]^{U} \ar[rd]_{} & {} & \LMod(\Delta,\PS) \ar[ld]\\
                     {}          & \RMon{\Delta} . & {}
}
 \end{xy}
\]

\noindent
We define a morphism of $S$--modules to be a natural transformation which
becomes the identity when composed with the forgetful functor. 
We call these morphisms \emph{half--equations}.
The collection of $S$--modules and their morphisms
yield a category 
which we call the category of $S$--modules
(or the category of half--equations).
\end{definition}

\begin{remark}
Objects of $\LMod(\Delta,\PS)$ are pairs of a monad $P$ on $\Delta:\Set\to\PO$ and a $P$--module $M$.
Given an $S$--module $U$, we sometimes write $U(R)$ for the second component of $U(R)$, i.e.\ for 
the module over the monad (underlying the representation) $R$, see for instance Rem.\ \ref{rem:half_equation_comm}.
We also write \[U^R_X := U(R)(X)\] for the value of an $S$--module at the representation $R$ and the set $X$.
Similarly, for a half--equation $\alpha : U \to V$ we write \[\alpha^R_X := \alpha(R)(X) : U^R_X \to V^R_X \enspace.\]

\end{remark}

\begin{remark}\label{rem:half_equation_comm}
 A half--equation $\alpha$ from $S$--module $U$ to $V$ associates to 
 any representation $P$ a morphism of $P$--modules $\alpha^P : U(P) \to V(P)$ in $\RMod(P,\PS)$ such that
for any morphism of $S$--representations $f : P \to R$ the following diagram commutes:
 \begin{equation*}
\begin{xy}
 \xymatrix @=4pc  {
   **[l](P,U(P)) \ar[r]^{\alpha^P} \ar[d]_{(f,U(f) )}& **[r](P,V(P)) \ar[d]^{(f, V(f))} \\
   **[l](R, f^*(U(R))) \ar[r]_{\alpha^R} & **[r](R,f^* (V(R)))
}
\end{xy}
\end{equation*}
\end{remark}

\begin{lemma}
 The category of $S$--modules is cartesian.
\end{lemma}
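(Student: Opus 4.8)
The plan is to show that the category of $S$--modules has a terminal object and binary products, which together give all finite products. The key observation is that an $S$--module is a functor into $\LMod(\Delta,\PS)$ commuting with the projection to $\RMon{\Delta}$, and that morphisms (half--equations) become the identity after this projection; so we are really working "fibrewise" over the fixed underlying monad of each representation, and we can borrow the cartesian structure from the category $\PS$ via the constructions on modules from Section~\ref{mod_examples}.

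First I would construct the terminal $S$--module. Fix a representation $R$ of $S$ with underlying monad (also called) $R$. Since $\PS$ has a terminal object $1$ (the one--element preordered set, viewed as a set), the constant module $c \mapsto 1$ is terminal in $\RMod(R,\PS)$ by the Definition of constant and terminal module. Define $\ast(R)$ to be this terminal $R$--module, with object part the pair $(R, 1)$ in $\LMod(\Delta,\PS)$. For a morphism of representations $f : P \to R$, the functoriality data $\ast(f)$ must be a morphism $(P, 1_P) \to (P, f^*1_R)$ in $\LMod(\Delta,\PS)$, i.e.\ a $P$--module morphism $1_P \to f^*1_R$; but $f^*1_R$ has the same underlying functor as $1_R$, namely the constant functor on $1$, so this is just the identity on the constant module, which is trivially a module morphism. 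Functoriality in $f$ and commutation with the forgetful functor to $\RMon{\Delta}$ are immediate. That $\ast$ is terminal in the category of $S$--modules: given any $S$--module $U$, the family of unique morphisms $U^R \to \ast(R) = 1$ in $\RMod(R,\PS)$ assembles into a natural transformation (the required naturality square of Rem.~\ref{rem:half_equation_comm} commutes because any two parallel maps into a terminal object agree), and it becomes the identity after the forgetful functor since its components are the identity on underlying monads; uniqueness is the fibrewise uniqueness in each $\RMod(R,\PS)$.

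Next I would construct binary products. Given $S$--modules $U$ and $V$, define $(U \times V)(R) := (R,\ U(R) \times V(R))$, using the product in $\RMod(R,\PS)$ from the Definition of products (note $\PS$ has products: a product of preordered sets carried by the cartesian product of the underlying sets). For $f : P \to R$ one defines $(U\times V)(f)$ from $U(f)$ and $V(f)$, using that pullback commutes with products (Lem.~\ref{pb_prod}), so $f^*(U(R)\times V(R)) = f^*U(R) \times f^*V(R)$ and the pair $(U(f),V(f))$ induces the required $P$--module morphism $U(P)\times V(P) \to f^*(U(R)\times V(R))$. The projections $\pi_U : U\times V \to U$ and $\pi_V : U \times V \to V$ are given fibrewise by the projections in $\RMod(R,\PS)$, which are the identity on underlying monads, hence genuine half--equations; their naturality is inherited from that of $U(f)$ and $V(f)$. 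The universal property follows fibrewise: a cone $(\alpha : W \to U,\ \beta : W \to V)$ of half--equations induces, at each $R$, a unique pairing $\langle \alpha^R, \beta^R\rangle : W(R) \to U(R)\times V(R)$ in $\RMod(R,\PS)$; one checks this family is natural (a diagram chase in $\LMod(\Delta,\PS)$ using naturality of $\alpha,\beta$ and functoriality of products and pullback, exactly the kind of bookkeeping already invoked in Def.~\ref{def:morph_of_reps}) and is a half--equation since its components are identities on underlying monads.

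The main obstacle, and the only place genuine care is needed, is the verification that the assignments $(U\times V)(f)$ and $\langle\alpha,\beta\rangle$ are well defined as data in $\LMod(\Delta,\PS)$ and that the naturality squares of Rem.~\ref{rem:half_equation_comm} commute -- this requires combining functoriality of the product and pullback functors on modules with Lem.~\ref{pb_prod}, and checking that the relevant diagrams in $\RMod(P,\PS)$ commute. All of this is routine "abstract nonsense" once one notes that the forgetful functor $\LMod(\Delta,\PS) \to \RMon{\Delta}$ makes the category of $S$--modules fibred over the (discrete-on-morphisms, for the purposes of half--equations) situation, so finite products can be computed fibrewise in the cartesian categories $\RMod(R,\PS)$; there are no hard estimates or constructions, only diagram chases.
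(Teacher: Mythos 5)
Your proof is correct and follows exactly the route the paper intends: the paper states this lemma without proof, but the ingredients it has just set up --- the terminal constant module in $\RMod(R,\PS)$, the product on $\RMod(R,\PS)$, and Lemma~\ref{pb_prod} for compatibility of pullback with products --- are precisely what you use to assemble the terminal $S$--module and binary products fibrewise. Your additional checks (that the projections and pairings are identities on underlying monads, hence genuine half--equations, and that naturality follows from the universal properties) are the right bookkeeping and contain no gaps.
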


We give some examples of generic $S$--modules. The inductive class of $S$--modules
thus defined is of importance later.

\begin{definition}[Classic $S$--module]\label{def:alg_s_mod}
 We call \emph{classic} any $S$--module 
satisfying the following inductive predicate: 
 \begin{itemize}
  \item The map $\Theta : R \mapsto (R,\hat{R})$ is a classic $S$--module.
  \item If the $S$--module $M : R\mapsto (M_1(R),M_2(R))$ is classic, so is 
              \[M' : R\mapsto \bigl(M_1(R), M_2(R)'\bigr).\]
  \item If $M$ and $N$ are classic, so is \[M\times N : R\mapsto \bigl(M_1(R), M_2(R)\times N_2(R)\bigr).\]
  \item The terminal module $*: R\mapsto (R,1)$ is classic.
 \end{itemize}
Using the same notation as in Def.\ \ref{def:alg_notation}, any list of natural numbers defines uniquely a 
 classic $S$--module.
\end{definition}

The following examples of half--equations are building blocks for the inequation specifying 
beta reduction:

\begin{definition}\label{def:subst_half_eq}
 The substitution operation
 \[\subst  : R\mapsto \subst^R : \hat{R}' \times \hat{R}\to \hat{R} \]
 is a half--equation over any signature $S$.
 Its domain and codomain are classic.
\end{definition} 
\begin{example}[Ex. \ref{ex:sig_ulc} continued]\label{ex:app_circ_half}
  The map
 \[ \app \circ (\abs \times\id) : R \mapsto \app^R \circ (\abs^R \times \id^R) :  \hat{R}' \times \hat{R} \to \hat{R}\]
 is a half--equation over the signature $\Lambda$.
\end{example}

\begin{definition}
  \label{def:arity_classic_module_untyped}
 Any arity $s = [n_1, \ldots, n_m] \in S$ defines a classic $S$--module 
    \[\dom(s) : R\mapsto R^s \enspace . \] 
\end{definition}

An \emph{inequation} is a pair of parallel half--equations:

\begin{definition}[Inequation, 2--Signature]
 Given a 1--signature $S$, an \emph{$S$--inequation} is a pair of parallel half--equations between $S$--modules. 
We write 
     \[\alpha \leq \gamma : U\to V\] 
for the inequation $(\alpha, \gamma)$ with domain $U$ and codomain $V$.
A \emph{2--signature} is a pair $(S,A)$ of a 1--signature $S$ and a set $A$ of 
 $S$--inequations.
\end{definition}

Given a set $A$ of $S$--inequations, we can ask whether a given representation $R$ of $S$ satisfies the inequations of $A$.
If this is the case, we call $R$ a representation of the 2--signature $(S,A)$:

\begin{definition}[Representations of a 2--Signature] \label{def:rep_ineq}
 Let $\alpha\leq \gamma : U\to V$ be an inequation over $S$, and let $R$ be a representation of $S$.
 We say that \emph{$R$ satisfies $\alpha\leq \gamma$} if 
  $\alpha^R \leq \gamma^R$ pointwise, i.e.\ if for any set $X$ and any $y\in U(R)(X)$, 
    \[\alpha^R_X(y) \enspace  \leq \enspace \gamma^R_X(y) \enspace .\]

\noindent
For a set $A$ of $S$--inequations, we call \emph{representation of $(S,A)$} any representation of $S$ that
satisfies each inequation of $A$.
We define the
category of representations of the 2--signature $(S, A)$ to be the full subcategory in the category of
representations of $S$ whose objects are representations of $(S, A)$.

\end{definition}

\begin{example}[Ex. \ref{ex:app_circ_half} continued] \label{ex:sig_ulc_prop}
We denote by $\beta$ the $\Lambda$--inequation 
\[  \quad \app \circ (\abs \times\id) \leq \subst \enspace .  \tag{$\beta$}\]
We call $\Lambda\beta$ the 2--signature $((\app,\abs),\beta)$.
 A representation $P$ of $\Lambda\beta$ is given by
\begin{itemize}
 \item a monad $P : \SET\stackrel{\Delta}{\to}\PO$ and
 \item two morphisms of $P$--modules 
       \[ \app : P \times P \to P \quad\text{and}\quad \abs :  P' \to P\] 
\end{itemize}
 such that for any set $X$ and any $y\in P(X')$ and $z\in P X$
\begin{equation*} \label{eq:prop_arity}
\app_X (\abs_X (y), z) \enspace \leq \enspace y [*:=z] \enspace .  
\end{equation*} 
\end{example}

\subsection{Initiality for 2--Signatures}

Given a 2--signature $(S,A)$, we would like to exhibit an initial object in its associated
 category of representations.
However, we have to rule out inequations which are never satisfied, 
since an empty category obviously does not not have an initial object.
We restrict ourselves to inequations with a classic codomain:

\begin{definition}[Classic Inequation]\label{def:ineq_soft}
  We say that an $S$--inequation is \emph{classic} if its co\-do\-main is classic.
\end{definition}

\begin{theorem}\label{thm:main}
 For any set of classic $S$--inequations $A$, the category of representations of $(S,A)$ has an initial object.
\end{theorem}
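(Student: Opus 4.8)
The plan is to build the initial representation of $(S,A)$ as a quotient of the initial representation of the underlying 1--signature $S$, which exists by Lemma \ref{lem:initial_wo_ineq}. Write $\hat S$ for the initial representation of $S$; its underlying monad sends a set $V$ of variables to the set of terms over $V$, equipped with the \emph{diagonal} preorder. The idea is to replace the diagonal preorder on $\hat S(V)$ by the smallest preorder that (i) makes all arity representation morphisms $s^{\hat S}$ monotone, (ii) makes substitution $\sigma$ monotone in the required sense, and (iii) satisfies every inequation $\alpha \leq \gamma$ in $A$. Concretely, for each $V$ I would define a relation on $\hat S(V)$ by taking the union, over all inequations $(\alpha \leq \gamma : U \to V_0) \in A$, all sets $X$, all maps $X \to \hat S(V)$, and all $y \in U(\hat S)(X)$, of the pairs obtained by transporting $\alpha^{\hat S}_X(y) \leq \gamma^{\hat S}_X(y)$ along the functorial action of $\hat S$ and its modules; then close under the arity morphisms (``propagation into subterms''), under the monad structure, and finally under reflexivity and transitivity. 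Call the resulting preordered-set-valued functor $T$. The softness hypothesis is what makes this work: since the codomain $V_0$ of each inequation is algebraic, $\gamma^{\hat S}_X(y)$ is literally a term (or tuple of terms) built from the data of $\hat S$, so the generating relation is between genuine elements of $\hat S(V)$ and the closure makes sense; for a non-soft codomain there would be no target element to relate to and the category could be empty.

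The key steps, in order, are: (1) Verify that $T$ is a well-defined relative monad on $\Delta$: the underlying set-valued functor is unchanged, $\eta^{\hat S} = \Var$ still lands in $T$, and the substitution map $\sigma^{\hat S}$ is still monotone because we explicitly closed the preorder under the monad operations — this uses that substitution was already available on $\hat S$ by Lemma \ref{lem:initial_wo_ineq} and only monotonicity is new, exactly as in Example \ref{ex:ulcbeta}. (2) Verify that each arity morphism $s^{\hat S} : \hat S^s \to \hat S$ is still a morphism of $T$--modules, i.e.\ monotone — immediate since we closed under propagation. Hence $T$ is a representation of $S$. (3) Verify that $T$ satisfies every inequation in $A$: this is immediate by construction, since the generating pairs of the preorder include exactly $\alpha^T_X(y) \leq \gamma^T_X(y)$ (using that the $S$--module actions of $U$ and $V_0$ on $T$ agree with those on $\hat S$ up to the underlying-set identity, because half--equations become the identity under the forgetful functor). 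So $T$ is a representation of $(S,A)$. (4) Prove initiality: given any representation $R$ of $(S,A)$, it is in particular a representation of $S$, so there is a unique morphism of $S$--representations $h : \hat S \to R$. I must show $h$ is also a morphism out of $T$, which amounts to checking that $h_V : T(V) \to R(V)$ is monotone. Since the preorder on $T(V)$ is generated by the four classes of pairs above, it suffices to check $h$ preserves each generator: it preserves the arity-propagation generators because $h$ commutes with the $s^R$; it preserves the monad-structure generators because $h$ is a monad morphism; it preserves reflexivity and transitivity because $R$'s order is a preorder; and it preserves the inequation generators because $R$ satisfies every inequation of $A$ and $h$ is compatible with the $S$--module structure (Remark \ref{rem:half_equation_comm}). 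Uniqueness is inherited from uniqueness in $\Rep^\Delta(S)$, since the forgetful functor from representations of $(S,A)$ to $\Rep^\Delta(S)$ is faithful (it is the inclusion of a full subcategory).

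I expect the main obstacle to be step (1)--(2): making the inductive/closure definition of the preorder on $T$ genuinely well-formed and simultaneously compatible with \emph{all} the structure — in particular checking that after closing under transitivity and under the monad operations the substitution map is still monotone in its first argument in the sense required by Definition \ref{ex:subst}, and that propagation through the derived modules $M'$ (Definition \ref{def:deriv_one}) behaves correctly with respect to the shift operation. This is essentially a simultaneous induction over the generation of the preorder and over the algebraic structure of the codomains, and it is where the softness hypothesis is really used: an algebraic $S$--module is built from $\hat R$ by products and derivations (Definition \ref{def:alg_s_mod}), so $\gamma^{\hat S}_X(y)$ decomposes into components that are terms of $\hat S$, and propagation can be defined componentwise. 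Everything else — verifying the monad and module axioms for $T$, and the compatibility squares for $h$ — is routine diagram-chasing once the preorder is pinned down. One should also record that this construction specializes correctly to the running example: for $\Lambda\beta$ the single inequation $\beta$ yields precisely the reflexive–transitive closure of $\beta$-reduction together with its propagation into subterms, so $T = \ULCB$ of Example \ref{ex:ulcbeta}, which is therefore the initial representation of $\Lambda\beta$.
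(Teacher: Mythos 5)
Your construction runs in the opposite direction from the paper's. You build the preorder on the initial object \emph{from below}, as the inductive closure of the generating pairs $\alpha^{\hat{S}}_X(y)\leq\gamma^{\hat{S}}_X(y)$ under propagation into subterms, substitution, reflexivity and transitivity. The paper builds it \emph{from above}: it sets $x\leq_A y$ if and only if $i_R(x)\leq_R i_R(y)$ for \emph{every} representation $R$ of $(S,A)$, where $i_R$ is the initial morphism of $S$--representations. That semantic definition makes monotonicity of substitution, of the arity morphisms and of the initial morphism essentially automatic (each reduces to the corresponding property in the models $R$), and makes initiality immediate; the whole difficulty is concentrated in showing that $\Sigma_A$ \emph{satisfies} $A$, which is where softness enters via Lemma \ref{lem:useful_lemma}, the pointwise characterization of $\leq_A$ on algebraic modules. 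Your version reverses the trade-off: satisfaction of $A$ is (nearly) by construction, while every monotonicity statement requires an induction over the generation of the preorder. If both constructions succeed they yield the same preorder, since each resulting object is then a representation of $(S,A)$ and the unique comparison morphisms are the identity on carriers.

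There is, however, a genuine gap in your step (3). Softness constrains only the \emph{codomain} of an inequation to be algebraic; the domain $U$ is an arbitrary $S$--module, i.e.\ an arbitrary functor $\Rep^{\Delta}(S)\to\LMod(\Delta,\PS)$ over the forgetful functor to monads. The condition of commuting with the forgetful functor pins down only the monad component, not the underlying sets of the module component, so $U(T)(X)$ need not coincide with $U(\hat{S})(X)$: $T$ and $\hat{S}$ are different objects of $\Rep^{\Delta}(S)$, with the same carriers but different preorders. Your generating relation is indexed by $y\in U(\hat{S})(X)$, whereas satisfaction of the inequation by $T$ quantifies over $y\in U(T)(X)$; naturality along the identity-on-carriers morphism $j:\hat{S}\to T$ only yields the inequality on the image of $U(j)$. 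Re-indexing the generators by $U(T)(X)$ instead makes the definition circular, since $U(T)$ depends on the preorder being defined. (This is invisible in the running example because the domain $\hat{R}'\times\hat{R}$ of $\beta$ happens to be algebraic.) To close the gap you would need to restrict to inequations whose domain is also algebraic, impose a rigidity hypothesis on $U$, or adopt the paper's top-down definition, which quantifies directly over $U(\Sigma_A)(X)$ and so avoids the issue. The remaining obstacles you flag yourself in steps (1)--(2) are real but surmountable bookkeeping; this one is a missing idea.
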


\begin{proof}
   The basic ingredients for building the initial representation 
  are given by the initial representation $\Delta_*\Sigma$ in the category $\Rep^{\Delta}(S)$ (cf.\ Lem.\ \ref{lem:initial_wo_ineq}) 
  or, equivalently, by the initial representation $\Sigma$ in $\Rep(S)$.
  We call $\Sigma$ the monad underlying the representation $\Sigma$.
  
  The proof consists of 3 steps: at first, we define a preorder $\leq_A$ on the terms of $\Sigma$, induced by the set $A$ of inequations.
   Afterwards we show that the data of the representation $\Sigma$ --- substitution, representation morphisms etc. --- 
  is compatible with the preorder $\leq_A$ in a suitable sense. This will yield a representation $\Sigma_A$ of $(S,A)$.
  Finally we show that $\Sigma_A$ is the initial such representation.

\noindent
\emph{--- The monad underlying the initial representation:}

\noindent
    For any set $X$, we equip $\Sigma X$ with a preorder $A$ by setting, for $x,y\in \Sigma X$,

    \begin{equation} x \leq_A y \quad :\Leftrightarrow \quad \forall R : \Rep(S,A), \quad
                      i_R (x) \leq_R i_R (y) \enspace , 
      \label{eq:order}
    \end{equation}
where $i_R : \Sigma \to \bar{R}$ is the initial morphism of representations in monads coming from Zsid\'o's theorem (or, equivalently,
the initial morphism $\Delta_*\Sigma \to R$).
We have to show that the map 
 \[ X\mapsto \Sigma_A X := (\Sigma X, \leq_A) \] 
yields a relative monad over $\Delta$. 
The missing fact to prove is that the substitution of the monad $\Sigma$ with a morphism 
\[ f\in\PO(\Delta X, \Sigma_A Y) \cong \SET(X,\Sigma Y)\] 
is compatible with the order $\leq_A$:
given any $f \in \PO(\Delta X, \Sigma_A Y)$ we show that $\sigma^\Sigma(f) \in \Set(\Sigma X,\Sigma Y)$ is monotone with respect
to $\leq_A$ and hence (the carrier of) a morphism $\sigma(f) \in \PO(\Sigma_A X, \Sigma_A Y)$.
We overload the infix symbol $\bind{}{}$ to denote monadic substitution.
Suppose $x\leq_A y$, we show 
        \[\bind{x}{f} \enspace \leq_A \enspace \bind{y}{f} \enspace .\] 
Using the definition of $\leq_A$, 
we must show, for any representation $R$ of $(S,A)$, 
\[  i_R(\bind{x}{f}) \enspace \leq_R \enspace i_R(\bind{y}{f}) \enspace .\]
Since $i_R$ is a morphism of representations, it is compatible with the substitutions of $\Sigma$ and $\bar{R}$; we have
 \[ i_R(\bind{x}{f}) \enspace = \enspace \bind{i_R(x)} {\comp{f}{i_R}} \enspace . \]
Rewriting this equality and its equivalent for $y$ in the current goal yields the goal 
\[ \bind{i_R(x)} {\comp{f}{i_R}} \quad \leq_A \quad \bind{i_R(y)} {\comp{f}{i_R}} \enspace ,\]
which is true since the substitution of $R$ (whose underlying map is that of $\bar{R}$) is monotone in the first argument 
(cf.\ Rem.\ \ref{rem:about_substitution})
 and $i_R (x) \leq_R i_R(y)$ by assumption.
We hence have defined a monad $\Sigma_A$ on $\Delta$.
We interrupt the proof for an important lemma:
\begin{lemma}\label{lem:useful_lemma}
  Given a classic functor $V : \Rep^{\Delta}(S) \to \LMod(\Delta,\PS)$ from the category of representations in monads on $\Delta$ 
  to the large category of modules over such modules with codomain category $\PS$, we have
   \[ x \leq_A y \in V(\Sigma)(X) \quad \Leftrightarrow \quad \forall R : \Rep(S,A), \quad V(i_R)(x)\leq_{V^R_X} V(i_R)(y) \enspace ,\]
 where now and later we omit the argument $X$, e.g.\ in $V(i_R)(X)(x)$.
\end{lemma}
\begin{proof}[Proof of Lemma \ref{lem:useful_lemma}.]
   The proof is done by induction on the derivation of ``$V$ classic''. The only interesting case is where $V = M\times N$ is a product:
    \begin{align*} 
	  (x_1, y_1) \leq (x_2,y_2) &\Leftrightarrow x_1 \leq x_2 \wedge y_1 \leq y_2 \\
                         {}         &\Leftrightarrow  \forall R, M(i_R) (x_1) \leq M(i_R) (x_2) \wedge \forall R, N (i_R) (y_1) \leq N(i_R) (y_2) \\
                         {}         &\Leftrightarrow  \forall R, M(i_R) (x_1) \leq M(i_R) (x_2) \wedge  N (i_R) (y_1) \leq N(i_R) (y_2) \\
                         {}         &\Leftrightarrow  \forall R, V(i_R) (x_1,y_1) \leq V(i_R) (x_2,y_2) \enspace .% \wedge \forall R, N (i_R) y_1 \leq N(i_R) y_2 \\
    \end{align*}

\end{proof}

\noindent
\emph{--- Representing $S$ in $\Sigma_A$:}

\noindent
Any arity $s \in S$ should be represented by the module morphism $s^{\Sigma}$, i.e.\ the representation of $s$ in $\Sigma$. 
We have to show that those representations are compatible with the preorder $A$.
Given $x\leq_A y$ in $\dom(s,\Sigma)(X)$, we show (omitting the argument $X$ in $s^{\Sigma}(X)(x)$)
 \[ s^{\Sigma} (x) \quad \leq_A \quad s^{\Sigma}(y) \enspace. \]
By definition, we have to show that, for any representation $R$ as before,
\[ i_R (s^{\Sigma} (x)) \quad \leq_R \quad i_R (s^{\Sigma}(y)) \enspace. \]
Since $i_R$ is a morphism of representations, it commutes with the representational module morphisms ---
the corresponding diagram is similar to the diagram of Def.~\ref{def:morph_of_reps}.
By rewriting with this equality we obtain the goal
\[ s^R \Bigl(\bigl(\dom(s)(i_R)\bigr)(x)\Bigr) \quad \leq_R \quad s^R\Bigl(\bigl(\dom(s) (i_R)\bigr)(y)\Bigr) \enspace. \]
This goal is proved by instantiating Lem.\ \ref{lem:useful_lemma} with the classic $S$--module $\dom(s)$ (cf.\ Def.\ \ref{def:arity_classic_module_untyped})  
and the fact that $s^R$ is monotone.
We hence have established a representation -- which we call $\Sigma_A$ -- of $S$ in $\Sigma_A$.

\noindent
\emph{--- $\Sigma_A$ satisfies $A$:}

\noindent
The next step is to show that the representation $\Sigma_A$ satisfies $A$. 
Given an inequation 
     \[\alpha \leq \gamma : U \to V\] 
of $A$ with a classic $S$--module $V$, 
   we must show that for any set $X$ and any $x\in U(\Sigma_A)(X)$ in the domain of $\alpha$ we have 
       \begin{equation} \alpha^{\Sigma_A}_X(x) \quad \leq_A \quad \gamma^{\Sigma_A}_X(x) \enspace . \label{eq:sigma_a} \end{equation}
In the following we omit the subscript $X$. By Lemma \ref{lem:useful_lemma} the goal is equivalent to

\begin{equation}
 \forall R : \Rep(S,A), \quad V(i_R) (\alpha^{\Sigma_A}(x)) \quad \leq_{V^R_X} \quad V(i_R) (\gamma^{\Sigma_A} (x)) \enspace . \label{eq:sigma_a_alt}
\end{equation}
Let $R$ be a representation of $(S,A)$. We continue by proving \eqref{eq:sigma_a_alt} for $R$.
   By Remark \ref{rem:half_equation_comm} and the fact that $i_R$ is also the carrier of a 
   morphism of $S$--re\-presen\-tations from $\Delta\Sigma$ to $R$ (cf.\ Lemma \ref{lem:adjunction_reps}) we can rewrite the goal as
       \[ \alpha^R\bigl(U(i_R)(x)\bigr) \quad \leq_{V^R_X} \quad \gamma^R \bigr(U(i_R)(x)\bigr) \enspace , \]
  which is true since $R$ satisfies $A$.

\noindent
\emph{--- Initiality of $\Sigma_A$:}

\noindent
Given any representation $R$ of $(S,A)$, the morphism $i_R$ is monotone with respect to the orders on $\Sigma_A$ and $R$ by construction of $\leq_A$.
It is hence a morphism of representations from $\Sigma_A$ to $R$.
Unicity of the morphisms $i_R$ follows from its unicity in the category of representations of $S$, i.e.\ without inequations.
Hence $\Sigma_A$ is the initial object in the category of representations of $(S,A)$.
\end{proof}

\begin{example}[Ex.\ \ref{ex:sig_ulc_prop} continued]
 The only inequation of the signature $\Lambda\beta$ is classic. The initial representation of $\Lambda\beta$
 is given by the monad $\ULCB$ together with the $\ULCB$--module morphisms $\Abs$ and $\App$ (cf.\ Ex.\ \ref{ex:ulcb_constructor_mod_mor})
  as representation structure. 
\end{example}

\subsection{Some Remarks}

\begin{remark}[about ``generating'' inequations]
 Given a 2--signature $(S,A)$ and a representation $R$ of $S$,
  the representation morphism of modules $s^R$ of any arity $s$ of $S$ 
     is monotone.
For the initial representation $\Sigma_A$ of $(S,A)$ this means that any reduction between terms of $\Sigma$, which is specified by
an inequation of $A$,
 is automatically propagated into subterms.
Similarly, the reduction relation on $\Sigma$ generated by $A$ is by construction reflexive and transitive, since we consider representations 
in monads with codomain $\PO$.
For the example of $\Lambda\beta$ this means that in order to obtain the ``complete'' reduction relation $\twoheadrightarrow_{\beta}$, 
it is sufficient to specify only one inequation for the $\beta$--rule 
   \[ \lambda M (N) \enspace \leq \enspace M[*:=N] \enspace . \]
\end{remark}

\begin{remark}[about substitution]\label{rem:about_substitution}
 The substitution in Ex.\ \ref{ex:ulcbeta} is compatible with the order on terms in the following sense:
 \begin{packenum}
  \item $M \leq N \quad$ implies $\quad M[*:=A] \leq N[*:=A] \quad$ and
  \item $A \leq B\quad$ implies $\quad M[*:=A] \leq M[*:=B]$. \label{item:covar_in_subst_arg}
 \end{packenum}
 The first implication is a general fact about relative monads on $\Delta$: 
  for any such monad $P$ and any $f\in \PO(\Delta V,PW)$, the substitution $\sigma_{X,Y}(f) \in \PO(PV,PW)$ 
  is monotone.

The second monotonicity property, however, is not encoded in the framework we give in the 
present paper.
% 
% 
% The second monotonicity property, however, is \emph{false} for a monad on $\Delta$ in general.
% As an example, consider the monad given by
% \begin{align*} 
% F(V) ::=\quad &\Var : V \to F(V) \\ 
%          {}\mid{}&\bot : F(V) \\
%         {}\mid{} &(\Rightarrow) : F(V) \to F(V)\to F(V) \enspace , 
% \end{align*}
% equipped with a preorder which is contravariant in the first argument of the arrow constructor $(\Rightarrow)$.
% It is then clear that substituting in this position, the first argument of $(\Rightarrow)$, does in fact \emph{reverse}
% the order on terms, i.e.\ we obtain (using $\Rightarrow$ infixed)
% \[ A < B  \quad\text{implies} \quad (* \Rightarrow M) [*:=B] < (*\Rightarrow M)[*:=A] \enspace . \]
A different definition of monad which would enforce implication \ref{item:covar_in_subst_arg} to hold 
% --- and hence not include the example $F$ --- 
is given by considering $\PO$ as a category enriched over itself, or as a 2--category: 
given morphisms $f, g : \PO(V,W)$ we say that %there is precisely one 2-cell 
\[ f \Rightarrow g  \quad\text{iff}\quad
 f \leq g \quad\text{iff}\quad \forall v : V, f(v) \leq g(v) \enspace . \]
A monad $P$ would then have to be equipped with a substitution action that is given, for any two sets $V$ and $W$, by \emph{a functor} (of preorders)
\[ \sigma_{V,W} : \PO(\Delta V,PW) \to \PO(PV,PW) \enspace . \]

\noindent
Using this ``enriched'' definition of monad 
--- which is employed in another work of ours \parencite{DBLP:conf/wollic/Ahrens12} ---
for the representations of a 2--signature, we can prove that
any language defined by a 2--signature satisfies the second implication as well.
The proof is available in our \textsf{Coq} library.
\end{remark}

\begin{remark}[about finite contexts]\label{rem:finite_contexts}
 \textcite{DBLP:conf/fossacs/AltenkirchCU10} characterize the untyped lambda calculus as
a relative monad on the inclusion functor $i:\Fin\to\Set$ from finite sets to sets.
 An anonymous referee suggested combining our viewpoint --- syntax as monad over $\Delta:\Set\to\PO$ ---  
with \citeauthor{DBLP:conf/fossacs/AltenkirchCU10}'s: one might consider the 
 lambda calculus as a relative monad on the composition $\comp{i}{\Delta}:\Fin\to\PO$, and, more generally,
 one might consider representations of a signature $(S,A)$ over monads on $\comp{i}{\Delta}:\Fin\to\PO$.
 The above theorem remains true when replacing monads on $\Delta$ by monads on $\comp{i}{\Delta}$
 everywhere.
 An equivalence between the theorem thus obtained and our Thm.\ \ref{thm:main} might be established in a
  way similar to what \citeauthor{ju_phd} does in her PhD thesis:
 she shows, by means of adjunctions between the respective categories of models, the equivalence between the approach
 of \textcite{fpt} --- based on monoids over finite contexts --- and the approach of \textcite{DBLP:conf/wollic/HirschowitzM07},
  where models are built from  monads on the category $\Set$, i.e.\ over arbitrary contexts.
\end{remark}

\begin{remark}[about monads on $\PO$]\label{rem:endo_ord}
 As mentioned in Sec.\ \ref{sec:rel_work}, \textcite{DBLP:journals/njc/GhaniL03} and \textcite{DBLP:journals/iandc/HirschowitzM10}
 suggest the use of monads over the category $\PO$ of preordered sets for modelling syntax with a rewriting relation.
 Indeed, representations of a signature $(S,A)$ could be analogously defined for such monads.
The above construction of the initial representation of $(S,A)$ 
  carries over to representations in such monads,
 thus yielding an initiality result in which syntax is modelled as monad on $\PO$.
 It might be interesting to establish a precise connection --- e.g., in form of adjunctions ---
  between the resulting categories of representations in monads on $\PO$ and representations
  in relative monads on $\Delta$.
\end{remark}

\section{Formalization in the proof assistant \textsf{Coq}} \label{subsec:1-sig-formal}

In this section we explain some elements of the formalization of the initiality theorem in the proof assistant \textsf{Coq}.
However, we only explain the implementations of definitions and lemmas that are specific to the theorem.
We base ourselves on a general library of category theoretic concepts the formalization details of which we do not go into.
The interested reader can find an in--depth description and the complete \textsf{Coq} code 
online\footnote{\sourceurl}.
The implementation of categories, monads and modules over monads 
(which are analogous to the implementation of their relative counterparts used here)
is explained in detail by \textcite{ahrens_zsido}.

For a morphism $f$ from object $a$ to object $b$ in any category we write \lstinline!f : a ---> b!. 
Composition of morphisms $f : a \to b$ and $g : b \to c$ is written \lstinline!f ;; g!.

\subsection{Arities as Lists}

According to Def.\ \ref{def:1--signature},
a 1--signature consists of an indexing type and, for each index, a list of natural numbers,
indicating the number of arguments of a constructor, as well as the number of variables 
bound in each argument.
In the formalization they are simply called ``signatures'':

 \begin{lstlisting}
Notation "[ T ]" := (list T) (at level 5).
Record Signature : Type := {
  sig_index : Type ;
  sig : sig_index -> [nat] }.
 \end{lstlisting}

\noindent
Next we formalize context extension according to a natural number, cf.\ Sec.\ \ref{subsec:deriv}.
These definitions are important for the definition of the module morphisms we 
associate to an arity, cf.\ below.
Context extension is actually functorial.
Given a natural number \lstinline!n! %$n$ 
and a set of variables \lstinline!V!, %$V$
 we recursively
define the set \lstinline!V ** n! to be the set \lstinline!V! enriched with \lstinline!n!
additional variables:

 \begin{lstlisting}
Fixpoint pow (n : nat) (V : TYPE) : TYPE :=
  match n with
  | 0 => V
  | S n' => pow n' (option V)
  end.
Notation "V ** n" := (pow n V) (at level 10).
Fixpoint pow_map (l : nat) V W (f : V ---> W) :
         V ** l ---> W ** l :=
  match l return V ** l ---> W ** l with
  | 0 => f
  | S n' => pow_map (^ f)
  end.
Notation "f ^^ l" := (pow_map (l:=l) f) (at level 10).
 \end{lstlisting}

\subsection{Representations}

Given an  arity $s$, i.e.\ a list of natural numbers $s$ (cf.\ Def.\ \ref{def:1--signature}), and a relative monad $P$ on the functor $\Delta$,
we need to define the product module $P^{s}$. More generally, we define $M^s$ for any $P$--module $M$ with codomain $\PO$.
We build this module from scratch instead of 
relying on the category--theoretic constructions such as product and derivation functor for the module categories,
allowing us to omit the insertion of isomorphisms in the style of Lems.\ \ref{pb_prod} and \ref{pb_comm}.
The reasons for this design choice are explained elsewhere \parencite{ahrens_zsido}.
Given any module \lstinline!M! over a monad \lstinline!P! 
from sets to preordered sets, we define the 
product type \lstinline!prod_mod_c! as an inductive type family parametrized by a 
context and dependent on a list of naturals. 
Actually we define at first the carrier depending not on a module, but just on a carrier map \lstinline!M : TYPE -> Ord!.
The relation on the product is induced by that on \lstinline!M!.
% The bound variable \lstinline!a! below is of type \lstinline![nat]!.
% \begin{form}[Product module, carrier map] \label{code:prop_prod_mod}
\begin{lstlisting}
Variable M : TYPE -> Ord.
Inductive prod_mod_c (V : TYPE) : [nat] -> Type :=
  | TTT : prod_mod_c V nil
  | CONSTR : forall b bs,
         M (V ** b)-> prod_mod_c V bs -> prod_mod_c V (b::bs) .
Notation "a -:- b" := (CONSTR a b) (at level 60).
Inductive prod_mod_c_rel (V : TYPE) : forall n, relation (prod_mod_c M V n) :=
  | TTT_rel : forall x y : prod_mod_c M V nil, prod_mod_c_rel x y
  | CONSTR_rel : forall n l, forall x y : M (V ** n),
          forall a b : prod_mod_c M V l, x << y -> 
     prod_mod_c_rel a b -> prod_mod_c_rel (x -:- a) (y -:- b).
\end{lstlisting}
% \end{form}
Note that the infixed ``\lstinline!<<!'' is overloaded notation and denotes the relation of any preordered set.
For any given list \lstinline!a! of naturals and any set \lstinline!V! 
of variables, the set \lstinline!prod_mod_c V a! equipped
with the relation \lstinline!prod_mod_c_rel V a! is in fact a preordered set. For the proof of transitivity we rely
on the \textsf{Coq} tactic \lstinline!dependent induction!, thus on the axioms
\begin{lstlisting}
JMeq.JMeq_eq : forall (A : Type) (x y : A), x ~= y -> x = y 
Eqdep.Eq_rect_eq.eq_rect_eq : forall (U : Type) (p : U) 
                                (Q : U -> Type) (x : Q p) 
                                (h : p = p), x = eq_rect p Q x p h
\end{lstlisting}
from the \textsf{Coq} standard library.
Now if \lstinline!M! is not just a map of type \lstinline!TYPE -> Ord!, 
but a module over some relative monad \lstinline!P! over \lstinline!Delta! (the functor $\Delta:\Set\to\PO$),
we equip the product map with a module substitution in form of a recursive function:

% \begin{form}[Product module, substitution]\label{code:prop_prod_subst}
 \begin{lstlisting}
Variable M : RModule P Delta.
Fixpoint pm_mkl l V W (f : Delta V ---> P W)
      (X : prod_mod_c (fun V => M V) V l) : prod_mod_c _ W l :=
     match X in prod_mod_c _ _ l return prod_mod_c (fun V => M V) W l with
     | TTT => TTT _ W
     | elem -:- elems =>
      rmkleisli (RModule_struct := M) (lshift _ f) elem -:- pm_mkl f elems
     end.
(* ... *)
Definition prod_mod (a : [nat]) := Build_RModule (prod_mod_struct a).
 \end{lstlisting}
% \end{form}
Afterwards we prove by induction that this map is indeed monotone with respect to the preorder 
\lstinline!prod_mod_c_rel! defined above. Altogether, we obtain 
a module \lstinline!prod_mod M l! for any module \lstinline!M : RMOD P Ord! and
any list of naturals \lstinline!l!.

% To any $P$--module $M$ with codomain $\PO$ and any classic arity given as a list of naturals 
% we thus associate a type of $P$--module morphisms:
To any arity \lstinline!ar : [nat]! and a module \lstinline!M! over a monad \lstinline!P!
we associate a type of module morphisms \lstinline!modhom_from_arity ar M!.
Representing \lstinline!ar! in \lstinline!M! then means giving a term of type \lstinline!modhom_from_arity ar M!.
In Def.\ \ref{def:rep_in_rmonad}
we have defined re\-presen\-ta\-tions in \emph{monads} only. Indeed we instantiate \lstinline!M! with 
the tautological module later.

\begin{lstlisting}
Variable P : RMonad Delta.
Definition modhom_from_arity (M : RModule P Ord) (ar : [nat]) : Type := RModule_Hom (prod_mod M ar) M.
\end{lstlisting}

\noindent
For the rest of the section, a representation \lstinline!S! is fixed via a \textsf{Coq} section variable:
\begin{lstlisting}
Variable S : Signature.
\end{lstlisting}

\noindent
As just mentioned, representing the signature \lstinline!S! in a monad \lstinline!P! (cf.\ Def.\ \ref{def:rep_in_rmonad}) means
providing a suitable module morphism for any arity of \lstinline!S!, i.e.\ providing, for any 
element of the indexing set \lstinline!sig_index S!, a term of type \lstinline!modhom_from_arity P (sig i)!:

% Given a signature $S$, a representation of $S$ in a monad $P$ is given by a $P$--module morphism
% for any arity $i\in S$. The record \lstinline!Representation! bundles the monad and the representation
% therein:
%Variable S : Signature.
% \begin{form}[Representation of 1--Signature, Def.\ \ref{def:rep_in_rmonad}]
 \begin{lstlisting}
Definition Repr (P : RMonad Delta) := 
  forall i : sig_index S, modhom_from_arity P (sig i).
Record Representation := {
  rep_monad :> RMonad Delta ;
  repr : Repr rep_monad }.
 \end{lstlisting}
% \end{form}
% 
Note that the projecton \lstinline!rep_monad! is declared as a \emph{coercion} via the special syntax \lstinline!:>!.
This coercion allows for abuse of notation in \textsf{Coq} as we do informally according to Def.\ \ref{def:rep_in_rmonad},
cf.\ Sec.\ \ref{subsec:ineq_formal} for a use of this coercion.

\subsection{Morphisms of Representations}

A morphism of representations from $P$ to $Q$ ist given by a monad morphism $f : P \to Q$
between the underlying monads such that a diagram commutes for any arity, cf.\ Def.\ \ref{def:morph_of_reps}.
The main task in the implementation is to define this diagram for a given arity $\ell$, and, more
specifically, the left vertical morphism 
\[\dom(\ell,f) = f ^ {\ell} : P^{\ell} \to f^* Q^{\ell} \enspace ,  \] 
using the notation of Def.\ \ref{def:alg_notation}.
Since the carrier of $P^{\ell}$ is defined as an inductive type, it makes sense to define $f^{\ell}$ by recursion on the inductive type underlying $P^{\ell}$,
named \lstinline!prod_mod_c P V l!:
\begin{lstlisting}
Variables P Q : RMonad Delta.
Variable f : RMonad_Hom P Q.
Fixpoint Prod_mor_c (l : [nat]) (V : TYPE) (X : prod_mod_c (fun V => P V) V l) :
                   (prod_mod_c _ V l) :=
  match X in prod_mod_c _ _ l
  return f* (prod_mod Q l) V with
  | TTT => TTT _ _
  | elem -:- elems => f _ elem -:- Prod_mor_c elems
  end.
\end{lstlisting}

\noindent
Proving this map monotone is an easy exercise, as well as its commutation property with substitution,
yielding the aforementioned module morphism.

Now we have all the ingredients we need in order to define the diagram of 
Def.\ \ref{def:morph_of_reps}. For an arity $a$ the diagram reads as follows:

% \begin{form}
 \begin{lstlisting}
Variable a : [nat].
Variable RepP : modhom_from_arity P a.
Variable RepQ : modhom_from_arity Q a.
Notation "f * M" := (# (PbRMOD f _ ) M).
Definition commute := Prod_mor a ;; f * RepQ == RepP ;; f^.
 \end{lstlisting}
% \end{form}
Here \lstinline!f^! denotes the module morphism induced by a monad morphism, cf.\ Def.\ \ref{def:ind_mod_hom}.
Using the preceding definition, we define morphisms of representations of \lstinline!S!:

% \begin{form}[Morphism of Representations, Def.\ \ref{def:prop_mor_of_reps}]
 \begin{lstlisting}
Variables P Q : Representation.
Class Representation_Hom_struct (f : RMonad_Hom P Q) :=
   repr_hom_s : forall i : sig_index S, commute f (repr P i) (repr Q i).
Record Representation_Hom : Type := {
  repr_hom_c :> RMonad_Hom P Q;
  repr_hom :> Representation_Hom_struct repr_hom_c }.
 \end{lstlisting}
% \end{form}

\subsection{Category of Representations}

In this section we describe in more detail the category of representations of a 1--signature, cf.\ Def.\ \ref{lem:cat_of_1-reps}.
The composition of morphisms of representations $f : P \to Q$ and $g : Q \to R$ is essentially done by composing the underlying 
monad morphisms. One has to show that this morphism indeed commutes with the representation morphisms of $P$ and $R$.
Similarly, the identity monad morphism of (the monad underlying) a representation $P$ yields a morphism of representations.
Fed with some suitable lemma, the \lstinline!Program! framework does the job for us:

 \begin{lstlisting}
Program Instance Rep_comp_struct : 
      Representation_Hom_struct (RMonad_comp f g).
Program Instance Rep_Id_struct : Representation_Hom_struct (RMonad_id P).
 \end{lstlisting}

\noindent
Since equality of morphisms of representations is defined as equality of the underlying monad 
morphisms, the categorical properties of composition are established already as part of the 
definition of the category \lstinline!RMONAD F! for any functor \lstinline!F!.

 \begin{lstlisting}
Program Instance REP_struct : Cat_struct (@Representation_Hom S) := {
  mor_oid a c := eq_Rep_oid a c;
  id a := Rep_Id a;
  comp P Q R f g := Rep_Comp f g }.
Definition REP := Build_Cat REP_struct.
 \end{lstlisting}

\subsection{Initiality without Inequations}

% TODO : Adjunction

We construct the initial object of the category \lstinline!REP!.
In the informal proof of Thm.\ \ref{lem:initial_wo_ineq} this initial object is the image under a left adjoint of the initial object 
in a category of representations as defined by \textcite{DBLP:conf/wollic/HirschowitzM07}.
For the formal proof we decide to implement the initial object of \lstinline!REP! directly, 
in order to obtain a compact formalization.
% However, the initial object %of the category built in \autoref{code:prop_cat_of_reps} 
% is constructed 
% that in \cite{ahrens_zsido}, where more complex, intrinsically typed syntax is constructed.
% The carrier of the initial representation is just a simplified --- because untyped --- version of \autoref{code:sts_initial_term}.
%We still decide to reimplement the untyped version from scratch, because in this way 
% By reimplementing the untyped we avoid 
% carrying around a useless object type --- coming from a singleton type --- argument everywhere, resulting in
% a formalization that is more readable.
% The main difference  work \parencite{ahrens_zsido} is that here we equip the set of terms with the 
% trivial diagonal preorder by applying the functor \lstinline!Delta!:
% \begin{form}
 \begin{lstlisting}
Inductive UTS (V : TYPE) : TYPE :=
  | Var : V -> UTS V
  | Build : forall (i : sig_index S), UTS_list V (sig i) -> UTS V
with
UTS_list (V : TYPE) : [nat] -> Type :=
  | TT : UTS_list V nil
  | constr : forall b bs,
      UTS (V ** b) -> UTS_list V bs -> UTS_list V (b::bs).
Notation "a -::- b" := (constr a b).
Definition UTS_sm V := Delta (UTS V).
 \end{lstlisting}
% \end{form}
%
We define renaming and, built on top of renaming, substitution:

\begin{lstlisting}
Fixpoint rename (V W: TYPE ) (f : V ---> W) (v : UTS V):=
    match v in UTS _ return UTS W with
    | Var v => Var (f v)
    | Build i l => Build (l //-- f)
    end
with
  list_rename V t (l : UTS_list V t) W (f : V ---> W) : UTS_list W t :=
     match l in UTS_list _ t return UTS_list W t with
     | TT => TT W
     | constr b bs elem elems => elem //- f ^^ b -::- elems //-- f
     end
where "x //- f" := (rename f x)
and "x //-- f" := (list_rename x f).
Fixpoint subst (V W : TYPE) (f : V ---> UTS W) (v : UTS V) :
 UTS W := match v in UTS _ return UTS _ with
    | Var v => f v
    | Build i l => Build (l >>== f)
    end
with
 list_subst V W t (l : UTS_list V t) (f : V ---> UTS W) : UTS_list W t :=
     match l in UTS_list _ t return UTS_list W t with
     | TT => TT W
     | elem -::- elems =>
       elem >== _lshift f -::- elems >>== f
     end
where "x >== f" := (subst f x)
and "x >>== f" := (list_subst x f).
\end{lstlisting}

\noindent
Renaming and substitution as just defined correspond to functoriality and
monadic substitution for representations in monads as done by \textcite{DBLP:conf/wollic/HirschowitzM07}.
We, according to Lem.\ \ref{lem:rmon_delta_endomon}, have to apply the functor \lstinline!Delta! 
in some places:
\begin{lstlisting}
Program Instance UTS_sm_rmonad : RMonad_struct Delta UTS_sm := {
  rweta c := #Delta (@Var c);
  rkleisli a b f := #Delta (subst f) }.
Canonical Structure UTSM := Build_RMonad UTS_sm_rmonad.
\end{lstlisting}

\noindent
The monad \lstinline!UTSM! is easily equipped with a representation of the signature \lstinline!S!;
the carrier of the representation of \lstinline!i : sig_index S! is given by 
the function 
\begin{lstlisting}
fun (X : prod_mod_c _ V (sig i)) => Build (i:=i) (UTSl_f_pm (V:=V) X)
\end{lstlisting}
that is, by the constructor \lstinline!Build i! of the type \lstinline!UTS!, precomposed
with an isomorphism \lstinline!UTSl_f_pm! from \lstinline!prod_mod_c UTS! to \lstinline!UTS_list!.
We thus obtain a representation \lstinline!UTSRepr! of the signature \lstinline!S!.
% \begin{lstlisting}
% Program Instance UTS_arity_rep_po (i : sig_index S) V : PO_mor_struct
%   (a:= prod_mod UTSM (sig i) V)
%   (b:= UTSM V)
%   (fun (X : prod_mod_c _ V (sig i)) => Build (i:=i) (UTSl_f_pm (V:=V) X)).
% \end{lstlisting}

Given another representation, say, \lstinline!R!, of \lstinline!S!, the morphism \lstinline!init!
from \lstinline!UTSRepr! to \lstinline!R!
is defined by recursion:
\begin{lstlisting}
Fixpoint init V (v : UTS V) : R V :=
        match v in UTS _ return R V with
        | Var v => rweta (RMonad_struct := R) V v
        | Build i X => repr R i V (init_list X)
        end 
with
 init_list l (V : TYPE) (s : UTS_list V l) : prod_mod R l V :=
    match s in UTS_list _ l return prod_mod R l V with
    | TT => TTT _ _
    | elem -::- elems => init elem -:- init_list elems
    end.
\end{lstlisting}
This map \lstinline!init! is compatible with lifting and substitution in \lstinline!UTSM! and \lstinline!R!, respectively:
\begin{lstlisting}
Lemma init_lift V x W (f : V ---> W) :
   init (x //- f) = rlift R f (init x).
Lemma init_kleisli V (v : UTS V) W (f : Delta V ---> UTS_sm W) :
  init (v >== f) = rkleisli (f ;; @init_sm W) (init v).
\end{lstlisting}
where \lstinline!init_sm W! is the (trivially) monotone version of \lstinline!init W! --- 
 the adjunct of \lstinline!init W ! under the adjunction of Def.\ \ref{def:delta}.
The latter of those lemmas constitutes an important part of the proof that \lstinline!init! is the carrier of a module 
morphism from \lstinline!UTSM! to \lstinline!R!.
It is trivial to prove that \lstinline!init! is also compatible with the representation structure of \lstinline!UTSRepr! and \lstinline!R!,
thus the carrier of a morphism of representations called \lstinline!init_rep : UTSRepr ---> R!.
Afterwards unicity of \lstinline!init_rep! is proved:

\begin{lstlisting}
Lemma init_unique :forall f : UTSRepr ---> R , f == init_rep.
\end{lstlisting}
\noindent
Finally we establish initiality by an instance declaration of the corresponding class.
The proof field of the class stating unicity is filled automatically by the \lstinline!Program! framework,
using the aforementioned lemma:
\begin{lstlisting}
Program Instance UTS_initial : Initial (REP S) := {
  Init := UTSRepr ;
  InitMor R := init_rep R }.
\end{lstlisting}

\subsection{Inequations and Initial Representation of a 2--Signature}\label{sec:ineq_formal}\label{subsec:ineq_formal}

For a 1--signature $S$, an $S$-module is defined to be a functor from representations of $S$ to the category 
whose objects are pairs of a monad $P$ and a module $M$ over $P$, cf.\ Def.\ \ref{def:cat_halfeq}.
The use of the cumbersome category $\LMod(\Delta,\PS)$ of pairs ensures that representations in a monad $P$
go to $P$--modules. In \textsf{Coq} we represent this dependency via a dependent type.
We do not make use of the functor properties of $S$--modules.
% , and use dependent types instead of the cumbersome category of pairs,
% in order to ensure that a representation in a monad $P$ is mapped to a $P$--module.

The below definition makes use of two \emph{coercions}. Firstly, we may write $a:\C$ because the 
``object'' projection of the category record (whose definition we omit) is declared as a coercion.
Secondly, the monad underlying any representation can be accessed without explicit projection
using the coercion we mentioned above.

\begin{lstlisting}
Record S_Module := {
  s_mod :> forall R : REP S, RMOD R wOrd ;
  s_mod_hom :> forall (R T : REP S)(f : R ---> T),
         s_mod R ---> PbRMod f (s_mod T)  }.
Notation "U @ f" := (s_mod_hom U f)(at level 4). 
\end{lstlisting}
Note that we write \lstinline!U@f! for the image of the morphism of representations \lstinline!f! under
the $S$--module \lstinline!U!. Source and target module of \lstinline!f! are implicit arguments in this application.

A half-equation is a natural transformation between $S$-modules. We need the naturality condition in the following.
Since we have not formalized $S$-modules as functors, we have to state naturality explicitly:

\begin{lstlisting}
Class half_equation_struct (U V : S_Module) 
    (half_eq : forall R : REP S, U R ---> V R) := {
  comm_eq_s : forall (R T : REP S)  (f : R ---> T), 
      U @ f ;; PbRMod_Hom _ (half_eq T) ==  half_eq R ;; V @ f }.
Record half_equation (U V : S_Module) := {
  half_eq :> forall R : REP S, U R ---> V R ;
  half_eq_s :> half_equation_struct half_eq }.
\end{lstlisting}

\noindent
We now formalize \emph{classic} $S$--modules.
Any list of natural numbers specifies uniquely a classic $S$--module, cf.\ Def.\ \ref{def:alg_s_mod}.
Given a list of naturals \lstinline!codl!, we call this $S$--module \lstinline!S_Mod_classic codl!.
A \emph{classic half--equation} is a half--equation with a classic co\-do\-main, 
and a classic inequation is a pair of parallel classic half--equations
(cf.\ Def.\ \ref{def:ineq_soft}):

\begin{lstlisting}
Definition half_eq_classic (U : S_Module)(codl : [nat]) := 
                               half_equation U (S_Mod_classic codl).
Record ineq_classic := {
  Dom : S_Module ;
  Cod : [nat] ;
  eq1 : half_eq_classic Dom Cod ;
  eq2 : half_eq_classic Dom Cod }.
\end{lstlisting}

\noindent
Give a representation \lstinline!P! and a (classic) inequation \lstinline!e!, 
we check whether \lstinline!P! satisfies  \lstinline!e! by pointwise comparison (cf.\ Def.\ \ref{def:rep_ineq}):

\begin{lstlisting}
Definition satisfies_ineq (e : ineq_classic) (P : REP S) :=
  forall c (x : Dom e P c), eq1 _ _ _ x << eq2 _ _ _ x.
Definition Inequations (A : Type) := A -> ineq_classic.
Definition satisfies_ineqs A (T : Inequations A) (R : REP S) :=
      forall a, satisfies_ineq (T a) R.
\end{lstlisting}

\noindent
We formalize sets of classic inequations as pairs of an indexing type \lstinline!A! together with a term of type \lstinline!Inequations A!,
that is, a map from \lstinline!A! 
to the type of classic inequations \lstinline!ineq_classic!.
The category of representations of $(S,A)$ is obtained as a full subcategory of the category of representations of $S$.
The following declaration produces a subcategory from predicates on the type of representations and 
on the (dependent) type of morphisms of representations, yielding the category \lstinline!PROP_REP! of representations of
$(S,A)$:

\begin{lstlisting}
Variable A : Type.
Variable T : Inequations A.
Program Instance Ineq_Rep : SubCat_compat (REP S)
     (fun P => satisfies_ineqs T P) (fun a b f => True).
Definition INEQ_REP : Cat := SubCat Ineq_Rep.
\end{lstlisting}

\noindent
We now construct the initial object of \lstinline!INEQ_REP!.
The relation on the initial object is defined precisely as in the paper proof, cf.\ Eq.\ \eqref{eq:order}:
\begin{lstlisting}
Definition prop_rel_c X (x y : UTS S X) : Prop :=
      forall R : PROP_REP, init (FINJ _ R) x << init (FINJ _ R) y.
\end{lstlisting}
Here, \lstinline!FINJ _ R! denotes the representation \lstinline!R! as a representation of \lstinline!S!,
i.e.\ the injection of \lstinline!R! in the category \lstinline!REP S! of representations of \lstinline!S!.
The relation defined above is indeed a preorder, and 
we define the monad \lstinline!UTSP! to be the monad whose underlying sets are identical 
to \lstinline!UTSM!, namely the sets defined by \lstinline!UTS!, but equipped with this new preorder.
This monad \lstinline!UTSP! is denoted by $\Sigma_A$ in the paper proof.

The representation module morphisms of the initial representation \lstinline!UTSRepr! can be reused
after having proved their compatibility with the new order, yielding a re\-pre\-sen\-ta\-tion \lstinline!UTSProp!.
This representation satisfies the inequations of \lstinline!T!:
\begin{lstlisting}
Lemma UTSPRepr_sig_prop : satisfies_ineqs T UTSProp.
\end{lstlisting}
We explicitly inject the representation into the category of representations of $(S,A)$:
\begin{lstlisting}
Definition UTSPROP : INEQ_REP :=
 exist (fun R : Representation S => satisfies_ineqs T R) UTSProp
  UTSPRepr_sig_prop.
\end{lstlisting}

\noindent
In order to build the initial morphism towards any representation \lstinline!R : INEQ_REP!, 
we first build the corresponding morphism in the category of representations of $S$:
\begin{lstlisting}
Definition init_prop_re : UTSPropr ---> (FINJ _ R) := ...
\end{lstlisting}
which we then inject, analoguously to the initial representation, into the subcategory of 
representations of $(S,A)$:
\begin{lstlisting}
Definition init_prop : UTSPROP ---> R := exist _ (init_prop_re R) I.
\end{lstlisting}

\noindent
Finally we obtain our Thm.\ \ref{thm:main}. An initial object of a category is given by
an object \lstinline!Init! of this category, a map associating go any object \lstinline!R!
a morphism \lstinline!InitMor R : Init ---> R!, and a proof of uniqueness of any such morphism.
We instantiate the type class \lstinline!Initial! for the category \lstinline!INEQ_REP!
of representations of $(S,A)$:

\begin{lstlisting}
Program Instance INITIAL_INEQ_REP : Initial INEQ_REP := {
  Init := UTSPROP ;
  InitMor := init_prop ;
  InitMorUnique := init_prop_unique }.
\end{lstlisting}

\noindent
We check its type after closing all the sections, thus abstracting from the section variables:
\begin{lstlisting}
Check INITIAL_INEQ_REP.
INITIAL_INEQ_REP
     : forall (S : Signature) (A : Type) (T : Inequations S A),
       Initial (INEQ_REP (S:=S) (A:=A) T)
\end{lstlisting}

\subsection{\texorpdfstring{$\Lambda\beta$}{Lambdabeta}: Lambda Calculus with \texorpdfstring{$\beta$}{Beta} reduction}

We implement the example 2--signature $\Lambda\beta$, cf.\ Ex.\ \ref{ex:sig_ulc_prop}.
Throughout this section, we use use a custom notation in \textsf{Coq} for the datatype of lists:
\begin{lstlisting}
Notation "[[ x ; .. ; y ]]" := (cons x .. (cons y nil) ..). 
\end{lstlisting}

\subsubsection{The 1--Signature \texorpdfstring{$\Lambda$}{Lambda}}

In order to specify the 1--signature $\Lambda$ (cf.\ Ex.\ \ref{ex:sig_ulc}), 
we first 
define an indexing set \lstinline!Lambda_index! consisting of two elements, \lstinline!ABS! and \lstinline!APP!.
This indexing set reflects the fact that the signature $\Lambda$ consists of two arities.
The record instance \lstinline!Lambda! is a term of type \lstinline!Signature!.
The map \lstinline!sig Lambda! then associates the corresponding lists of naturals to each of these elements, according to 
Def.\ \ref{def:1--signature}:
 
\begin{lstlisting}
Inductive Lambda_index := ABS | APP.
Definition Lambda : Signature := {|
  sig_index := Lambda_index ;
  sig := fun x => match x with 
                  | ABS => [[ 1 ]] 
                  | APP => [[ 0 ; 0]]
                  end |}.
\end{lstlisting}

\subsubsection{The \texorpdfstring{$\Lambda$}{Lambda}--Inequation \texorpdfstring{$\beta$}{beta}}

The definition of the inequation $\beta$ (cf.\ Ex.\ \ref{ex:sig_ulc_prop}) is a more challenging task, since a half--equation is not 
just combinatory data like a 1--arity, but given by suitable module morphisms.
At first, we define the substitution of \emph{one} variable (cf.\ Def.\ \ref{def:subst_half_eq}) as a half--equation.
The carrier \lstinline!subst_carrier! of the substitution is defined as in Def.\ \ref{ex:subst}.
Afterwards we prove that this carrier satisfies the properties of a module morphism, that is, is compatible with 
substitution in the source and target modules. After abstracting from the section variable \lstinline!R!,  
we obtain a function \lstinline!subst_module_mor!
which, given any representation \lstinline!R! of \lstinline!S!, yields the
substitution module morphism associated to (the monad underlying) \lstinline!R!.

\begin{lstlisting}
Variable S : Signature.
Variable R : REP S.
Definition subst_carrier :
   (forall c : TYPE, (S_Mod_classic_ob [[1; 0]] R) c ---> 
         (S_Mod_classic_ob [[0]] R) c) := ...
Program Instance sub_struct : RModule_Hom_struct
      (M:=S_Mod_classic_ob [[1 ; 0]] R) 
      (N:=S_Mod_classic_ob [[0]] R) 
   subst_carrier.
Definition subst_module_mor := Build_RModule_Hom (sub_struct R).
\end{lstlisting}

\noindent
The last step is to prove ``naturality''.
We recall that we do not implement $S$--modules as functors, but just as the data part of functors. 
This is why we put the word \emph{naturality} in quotes. After the proof we define our first half--equation, \lstinline!subst_half_eq!.

\begin{lstlisting}
Program Instance subst_half_s : half_equation_struct 
      (U:= S_Mod_classic [[1 ; 0]]) 
      (V:= S_Mod_classic [[0]]) 
   subst_module_mor.
Definition subst_half_eq := Build_half_equation subst_half_s.
\end{lstlisting}

\noindent
The definition of the second half--equation of Ex.\ \ref{ex:app_circ_half} is possible for any 1--signature with abstraction and application, such as 
the 1--signature $\Lambda$. To keep the example simple, we only define the half--equation for $\Lambda$.
The needed steps are precisely the same as for the substitution half--equation, so we just give the statements.
\begin{lstlisting}
Definition beta_carrier :
   (forall c : TYPE, (S_Mod_classic_ob [[1; 0]] R) c ---> 
         (S_Mod_classic_ob [[0]] R) c) := ...
Program Instance beta_struct : RModule_Hom_struct
      (M:=S_Mod_classic_ob [[1 ; 0]] R) 
      (N:=S_Mod_classic_ob [[0]] R) 
   beta_carrier.
Definition beta_module_mor := Build_RModule_Hom beta_struct.
Program Instance beta_half_s : half_equation_struct
      (U:=S_Mod_classic Lambda [[1 ; 0]])
      (V:=S_Mod_classic Lambda [[0]])
   beta_module_mor.
Definition beta_half_eq := Build_half_equation beta_half_s.
\end{lstlisting}

\noindent
We package both half--equations into one inequation, the beta rule of Ex.\ \ref{ex:sig_ulc_prop}:

\begin{lstlisting}
Definition beta_rule : ineq_classic Lambda := {|
   eq1 := beta_half_eq ;
   eq2 := subst_half_eq Lambda |}.
\end{lstlisting}

\noindent
We can now associate a short name to the category of representations of $\Lambda\beta$, where, for increased clarity,
we specify the implicit arguments:
\begin{lstlisting}
Definition Lambda_beta_Cat := INEQ_REP
    (S:=Lambda)(A:=unit)(fun x : unit => beta_rule).
\end{lstlisting}
Our formal definition allows that an inequation appears multiple times in a 2--signature, whereas in the informal definition we
have \emph{sets} of inequations.
Unlike for arities, having several copies of the same inequation 
does not change the resulting category or its initial object.
The initial representation is obtained via the specification
\begin{lstlisting}
Definition Lambda_beta := @Init _ _ _ 
        (INITIAL_INEQ_REP (fun x : unit => beta_rule)).
\end{lstlisting}

\section{Conclusions \& Future Work}
We have presented an initiality result for abstract syntax which integrates semantics specified by reduction rules by means of preorders. 
It is based on relative monads and modules over such monads.
Reduction rules are specified by \emph{inequations}, whose definition is largely inspired by 
\citeauthor{journals/corr/abs-0704-2900}'s equations \parencite{journals/corr/abs-0704-2900}.
For any 2--signature $(S,A)$ with classic inequations, we construct the initial object in the 
category of representations of $(S,A)$.
The theorem is proved formally in the proof assistant \textsf{Coq}.

On another line of work \parencite{ahrens_ext_init} we have extended 
\citeauthor{ju_phd}'s initiality result \parencite[Chap.\ 6]{ju_phd} for simply--typed syntax
to allow for models over \emph{varying} object types.
In this way initiality accounts for translation between languages over different types.
Both lines of work, varying object types and the integration of operational semantics, can be combined: 
in \parencite{DBLP:conf/wollic/Ahrens12} we present an initiality result which allows for specification of reductions
as well as change of object types.
As an example, we consider the language PCF with its usual small--step semantics.
By equipping $\ULCB$ with a representation of PCF, we obtain
a translation of PCF to $\ULCB$ which is faithful with respect to semantics.
Our approach should be extended to more complex type systems featuring dependent types or polymorphism.
Another interesting feature to work on are \emph{conditional} reductions/rewritings.
A more fine--grained account of reduction might be given by considering \emph{graphs} instead
of preordered sets.

\subsubsection*{Acknowledgments}
We are grateful to Andr\'e Hirschowitz and Marco Maggesi for invaluable discussions.
We thank the anonymous referees for their careful reading and suggestions,
which led to several improvements in the presentation.
We finally thank Thorsten Altenkirch and Conor McBride for their editorial work.

\printbibliography

\end{document}